\documentclass[reqno,12pt,centertags]{amsart}
\usepackage{amsmath,amsthm,amscd,amssymb,latexsym,upref,stmaryrd,epsfig,graphicx}
\usepackage[cp1251]{inputenc}
\usepackage[english]{babel}
\usepackage{mathtext}
\usepackage{amsfonts}
\usepackage{graphicx}
\usepackage{epsfig}

\usepackage{hyperref}
\usepackage[numbers,sort&compress]{natbib}
\newcommand*{\mailto}[1]{\href{mailto:#1}{\nolinkurl{#1}}}
\setlength{\baselineskip}{28pt}
\textheight 20 true cm
\textwidth 14 true cm

\usepackage{amsmath}
\usepackage{amsthm}

\newtheorem{theorem}{Theorem}[section]
\newtheorem{lemma}{Lemma}[section]
\newtheorem{remark}{Remark}[section]
\newtheorem{prop}{Proposition}[section]

\newtheorem{algorithm}{Algorithm}[section]
\newtheorem{definition}{Definition}[section]
\numberwithin{equation}{section}

\begin{document}

\thispagestyle{empty}

\noindent{\large\bf Solvability of the inverse scattering problem for the self-adjoint matrix Schr\"{o}dinger operator on the half line}
\\

\noindent {\bf  Xiao-Chuan Xu}\footnote{Department of Applied
Mathematics, School of Science, Nanjing University of Science and Technology, Nanjing, 210094, Jiangsu,
People's Republic of China, {\it Email:
xiaochuanxu@126.com}}
{\bf and Chuan-Fu Yang}\footnote{Department of Applied
Mathematics, School of Science, Nanjing University of Science and Technology, Nanjing, 210094, Jiangsu,
People's Republic of China, {\it Email: chuanfuyang@njust.edu.cn}}
\\

\noindent{\bf Abstract.}
{In this work we study the inverse scattering problem for the self-adjoint matrix Schr\"{o}dinger operator on the half line. We provide the necessary and sufficient conditions for the solvability of the inverse scattering problem.}

\medskip
\noindent {\it Keywords:}{
Matrix Schr\"{o}dinger operator; Inverse scattering problem; Solvability condition; Reconstruction algorithm}

\medskip
\noindent{\it 2010 Mathematics Subject Classification:} 34A55, 34L25, 34L40

\section{Introduction}
Consider the matrix Schr\"{o}dinger equation on the half-line
\begin{equation}\label{1}
-\Psi''(x)+V(x)\Psi(x)=k^2\Psi(x),\quad x\in \mathbb{R}^+:=(0,+\infty),
\end{equation}
where $\Psi$ is either an $n\times n$ matrix-valued function or a column vector-valued function with $n$ components, and
the potential $V(x)$ belongs to $L_1^1(\mathbb{R}^+)$ and satisfies
\begin{equation}\label{2}
V(x)^\dag=V(x).
\end{equation}
Here the dagger "$\dag$" denotes the matrix adjoint (complex conjugate and matrix transpose), and $V\in L_1^1(\mathbb{R}^+)$ means that
\begin{equation*}
  \int_0^\infty (1+x)\|V(x)\|dx<\infty,
\end{equation*}
where the matrix norm, without loss of generality, is defined as
\begin{equation*}
  \|V(x)\|=\max_l\sum_{s=1}^n|V_{ls}(x)|,
\end{equation*}
here $V_{ls}(x)$ denotes the $(l, s)$-entry of $V(x)$.

Let us consider (\ref{1}) with the most general self-adjoint boundary condition \cite{AKW,AKW1,AW,W,MH1,MH2,MH3}
\begin{equation}\label{3}
 -B^\dag \Psi(0)+A^\dag \Psi'(0)=0,
\end{equation}
where
\begin{equation}\label{4}
A=\frac{1}{2}\left(U+I_n\right)\quad\text{and}\quad B=\frac{i}{2}\left(U-I_n\right)
\end{equation}
for some constant $n \times n$ unitary matrix $U$. Here $I_n$ denotes the  $n \times n$ identity matrix. It is obvious that the matrices $A$ and $B$ satisfy
\begin{equation}\label{4a}
 B^\dag A=A^\dag B,\quad A^\dag A+B^\dag B>0.
\end{equation}
It should be pointed out that the conditions (\ref{4}) and (\ref{4a}) are equivalent. Indeed, it was shown in \cite{AKW} that if $A$ and $B$ satisfy (\ref{4a}) then there exists a unitary matrix $U$ and invertible matrix $D$ such that $AD$ and $BD$ have the form of (\ref{4}). Here, for convenience, we use (\ref{4}) in our paper.
Denote by $L(V,U)$ the problem (\ref{1}) and (\ref{3}).

Matrix Schr\"{o}dinger equation arises in quantum mechanics, electronics, nano science and other branches of science and engineering \cite{ZS,KS,KS1,KS2,PK,PK1}. The matrix Schr\"{o}dinger equation (\ref{1}) with self-adjoint boundary condition (\ref{3}) is connected with scattering in quantum mechanics involving particles of internal structures as spins, scattering on graphs and quantum wires (see \cite{BCFK,SBF,MH1,MH2,KS,KS1,KS2,EKKST,PK,PK1} and the references therein).

Inverse scattering problem for the scalar Schr\"{o}dinger operator on the half line has been studied thoroughly \cite{BL,VM,AR}. Levitan \cite{BL}, Marchenko \cite{VM}, and Ramm \cite{AR} respectively applied three different kinds of methods to obtain the characterization of the scattering data.

Matrix Schr\"{o}dinger operators are more complicated than the scalar ones. Some aspects of the matrix cases on a finite interval were studied in \cite{RC,CY,VY}, and the matrix cases on the full line were studied in \cite{AKW0,NB,GKM,WK}. For the matrix Schr\"{o}dinger operator on the half line, there are also some results \cite{AM,AKW,AKW1,AW,NB2,MH1,MH2,MH3}, whereas, along with many unsolved problems. Specifically,
Agranovich and Marchenko \cite{AM} investigated the matrix Schr\"{o}dinger operator with Dirichlet boundary condition, i.e. the problem $L(V,-I_n)$, and gave the necessary and sufficient conditions for the solvability of the inverse scattering problem. Harmer \cite{MH1,MH2,MH3} considered the problem $L(V,U)$, deduced the Marchenko equation, and studied its applications in the scattering problem on graphs. Aktosun et al \cite{AKW,AKW1,AW} investigated the small and high energy asymptotic behaviors for the scattering data, and also obtained Levinson's theorem. We also mention that some aspects of non-self-adjoint matrix differential operators are studied in \cite{NB1,NB3,FY}

As far as we know, there is no result dealing with the uniqueness and solvability of the inverse scattering problem for the problem $L(V,U)$, which are the main problems we shall study in this paper. The most general self-adjoint matrix Schr\"{o}dinger operator $L(V,U)$ is more complicated than the Dirichlet one, especially in the study of the inverse scattering problem. Because the unitary matrix $U$ in the boundary condition can yield many  possibilities. In the study of inverse scattering problem for the Dirichlet case \cite{AM}, the authors just assumed that the homogeneous equation corresponding to the Marchenko equation has only zero solution, which yields the unique solvability of the Marchenko equation by the Freldholm alternative theorem.  In this paper, we shall provide the more direct conditions on the scattering data that guarantee the unique solvability of the Marchenko equation.
In addition, we will also give the condition which guarantees that the recovered matrix $U$ in the boundary condition is unitary. In the Dirichlet case, there is no need to study it, because the boundary condition is known a priori. However in the most general self-adjoint case, it has to be considered.
It should be explained that
without the small and high energy asymptotic behaviors of the scattering matrix and the Jost matrix in \cite{AKW,AW} by Aktosun et al, one can't rigorously deduce the the Marchenko equation. In this paper, with the help of the results in \cite{AKW,AW}, we rigorously deduce the Marchenko equation.

This paper is organized as follows. Section 2 deals with some notations and known results in \cite{AM,AKW0,AKW,AW}, which are used in the next sections. In Section 3, we deduce the Marchenko equation, and introduce the so-called normalization matrices. Section 4 is concerned with the behaviors of the matrix-valued function $F(x)$, which is determined by the scattering data. In Section 5, we study the self-ajointness of $V(x)$ and unitarity of $U$ recovered from the scattering data. Section 6 presents the solvability conditions on the scattering data for solving the Marchenko equation.  The last section gives the uniqueness theorem of the inverse scattering problem, and the necessary and sufficient conditions for the scattering data.

\section{Preliminaries}

In this section, we introduce some notations, and recall the definitions of some relevant physical qualities together with their properties. The results presented in this section can be found in  \cite{AM,AKW0,AKW,AKW1,AW}.

Let $\mathbb{C}^n$ ($\mathbb{C}^{n,T}$) be the space of complex row (column) vector with $n$ components. The norm and the inner product are defined as follows.
\begin{equation*}
  \|x\|=\sum_{j=1}^n|x_j|,\quad(x,y):=xy^\dag=\sum_{j=1}^n{x}_j\bar{{y}}_j\;(\text{or}\;\sum_{j=1}^n\bar{x}_j{y}_j),\quad \forall x,y\in\mathbb{ C}^n\; (\text{or}\;\mathbb{C}^{n,T}),
\end{equation*}
where $\bar{y}_j$ means the complex conjugate of the number $y_j$. Denote $\mathbb{C}^\pm=\{k:\pm{\rm Im}k>0\}$ and $\overline{\mathbb{C}}^\pm=\mathbb{C}^\pm\cup\mathbb{R}$, and $\mathbb{R}^+=(0,+\infty)$. Let $L^p((a,b);\mathbb{C}^n)$, $L^p((a,b);\mathbb{C}^{n,T})$ and $L^p((a,b);\mathbb{C}^{n\times n})$ be the spaces of all row vector-, column vector- and matrix-valued functions, respectively, with each  element belonging to $L^p(a,b)$, where $p=1,2$ and $-\infty\le a<b\le+\infty$. If it does not cause misunderstanding, we may just use the notation $L^p(a,b)$.
We define the norms as
\begin{equation*}\label{y1}
\!\!\!\|g\|_{L^p}\!:=\sum_{j=1}^n\|g_j\|_{L^p},\;\; g\in L^p((a,b);\mathbb{C}^n)\;(\text{or }L^p((a,b);\mathbb{C}^{n,T})),\;\; p=1,2,
\end{equation*}
where $g_j(x)$ is the $j$-th component of $g(x)$. The notation "$0_{n}$" used below means an $n\times n$ zero matrix, sometimes, we also use "$0$" instead of an $n\times m$ zero matrix.
An overdot used below means the $k$-derivative.
By a generic constant $c$, we mean a constant that does not necessarily take the same value in each appearance.

Together with (\ref{1}), we consider the following equation
\begin{equation}\label{5}
-\Phi''+\Phi V(x)=k^2\Phi,\quad x>0,
\end{equation}
where $\Phi$ is either an $n\times n$ matrix-valued function or a row vector-valued function with $n$ components.
Let $[\Phi;\Psi]:=\Phi\Psi'-\Phi'\Psi$ denote the Wronskian. It is easy to prove that the Wronskian $[\Phi(k,x);\Psi(k,x)]$ does not depend on $x$.
In addition, if $\Psi(k,x)$ is a solution to (\ref{1}) for real $k$, then $\Psi(\pm k,x)^\dagger$ are solutions to (\ref{5}). In case $\Psi(k,x)$ and $\Psi(-\bar{k},x)^\dagger$ have analytic extensions from $k\in\mathbb{R}$ to $k\in\mathbb{C}^+$ and $\Psi(k,x)$ solves (\ref{1}), then $\Psi(-\bar{k},x)^\dagger$ solves (\ref{5}), and hence $[\Psi(-\bar{k},x)^\dag;\Psi(k,x)]$ is independent of $x$ and analytic for $k\in\mathbb{C}^+$.

Let $f(k,x)$ be the Jost solution to (\ref{1}), which is a matrix-valued function satisfying the integral equation
\begin{equation}\label{jost}
  f(k,x)=e^{ikx}+\int_x^\infty\frac{\sin k(t-x)}{k}V(t)f(k,t)dt,\quad k\in \overline{{\mathbb{C}}}^+.
\end{equation}

Let us recall some properties of the Jost solution $f(k,x)$ \cite{AM,AKW0,AKW}.
\begin{prop}\label{2.1}
Assume $V\in L_1^1(\mathbb{R}^+)$ and satisfies (\ref{2}). Then the following assertions hold.

(a) For each fixed $x\ge0$, the matrix-valued functions $f^{(v)}(k,x)$ ($v=0,1$) are analytic for $k\in\mathbb{C}^+$ and continuous for $k\in \overline{\mathbb{C}}^+$, and satisfies
\begin{equation}\label{an}
\!\!\!f(k,x)\!=e^{ikx}\left[\!I_n\!+\!\frac{Q(x)}{ik}\!+\!o\left(\!\frac{1}{k}\!\right)\!\right]\!,\; Q(x)\!:=\frac{\int_x^\infty V(t)dt}{2},\; |k|\to\infty\;\text{ in } \;\overline{\mathbb{C}}^+.
\end{equation}

(b) For each fixed $k\in \overline{\mathbb{C}}^+\setminus\{0\}$, the Jost solution $f(k,x)$ satisfies
\begin{equation}\label{yyy}
 f^{(v)}(k,x)=(ik)^v e^{ikx}[1+o(1)],\quad x\to+\infty,\quad v=0,1.
\end{equation}

(c)
The Jost solution $f(k,x)$ satisfies the Wronskian relations
\begin{equation}\label{10}
 [f(\pm k, x)^\dagger;f(\pm k,x)]=\pm2ikI_n,\quad k\in \mathbb{R},
\end{equation}
\begin{equation}\label{11}
 [f(- \bar{k}, x)^\dagger;f( k,x)]=0_n,\quad k\in \overline{\mathbb{C}}^+.
\end{equation}

(d) For $k\in \mathbb{R}\setminus\{0\}$, the Jost solutions $f(k,x)$ and $f(-k,x)$ form a fundamental system of solutions to (\ref{1}).

(e)
 The Jost solution $f(k,x)$ can be presented as
\begin{equation}\label{6}
 f(k,x)=e^{ikx}+\int_x^\infty K(x,t)e^{ikt}dt,\quad x\ge0,\quad k\in\overline{\mathbb{C}}^+,
\end{equation}
 where $K(x,t)$ is a continuous function of two variables, having first derivatives with respect to $x$ and $t$, and satisfying
\begin{equation}\label{7}
  \!\!||K(x,t)||\le c \sigma\left(\frac{x+t}{2}\right), \;\;\sigma(x):=\int_x^\infty ||V(t)||dt\in L^p(\mathbb{R}^+),\;\; p=1,2,
\end{equation}
\begin{equation}\label{8}
 V(x)=-2\frac{dK(x,x)}{dx},
\end{equation}
and
\begin{equation}\label{9}
\!\!\left\|K_x(x,t)\!+\!\frac{1}{4}V\left(\frac{x+t}{2}\right)\right\|+\left\|K_t(x,t)\!+\!\frac{1}{4}V\left(\frac{x+t}{2}\right)\right\|\le c\sigma(x) \sigma\!\!\left(\!\frac{x+t}{2}\!\right).
\end{equation}

\end{prop}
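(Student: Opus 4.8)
The plan is to derive assertions (a)--(d) from the Volterra integral equation (\ref{jost}) by the method of successive approximations combined with the Wronskian identity, and to obtain the transformation-operator representation (e) by recasting (\ref{jost}) as an integral equation for the kernel $K$. For (a) and (b) I would first set $g(k,x)=e^{-ikx}f(k,x)$, so that (\ref{jost}) becomes
\begin{equation*}
g(k,x)=I_n+\int_x^\infty\frac{e^{2ik(t-x)}-1}{2ik}\,V(t)\,g(k,t)\,dt,
\end{equation*}
whose kernel is bounded by $(t-x)$ uniformly for $k\in\overline{\mathbb{C}}^+$ and by $1/|k|$ for fixed $k\neq0$. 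Building the Neumann series $g=\sum_{m\ge0}g_m$ with $g_0=I_n$ and the obvious recursion, one gets a majorant of the form $\|g_m(k,x)\|\le \frac{1}{m!}\big(\int_x^\infty(t-x)\|V(t)\|\,dt\big)^m$, finite since $V\in L_1^1(\mathbb{R}^+)$; uniform convergence on compacta in $k$ then yields analyticity in $\mathbb{C}^+$ and continuity on $\overline{\mathbb{C}}^+$ of $g$ and, after differentiating the equation, of $g^{(1)}$, which gives (a). The expansion (\ref{an}) is read off the first iterate: the oscillatory part $\int_x^\infty e^{2ik(t-x)}V(t)\,dt$ vanishes as $|k|\to\infty$ by the Riemann--Lebesgue lemma (for real $k$) and by dominated convergence (for $k\in\mathbb{C}^+$), leaving the advertised $Q(x)/(ik)$ term while the higher iterates are $o(1/k)$. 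Assertion (b) follows from the same equation: for fixed $k\neq0$ the tail is bounded by $|k|^{-1}\sigma(x)\to0$, forcing $g(k,x)\to I_n$ as $x\to+\infty$, and differentiation supplies the $v=1$ case.

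For (c) I would use, as recorded in the remark preceding the proposition, that $f(\pm k,x)^\dagger$ solves (\ref{5}) for real $k$ while $f(k,x)$ solves (\ref{1}); hence $[f(\pm k,x)^\dagger;f(\pm k,x)]$ is independent of $x$, and letting $x\to+\infty$ and inserting the asymptotics of (b) produces the value $\pm 2ik\,I_n$, which is (\ref{10}). For (\ref{11}) the same constancy holds for $[f(-\bar k,x)^\dagger;f(k,x)]$ on $\mathbb{C}^+$ (note $-\bar k\in\mathbb{C}^+$ there), and as $x\to+\infty$ the controlling factor $e^{2ikx}\to0$, so the constant Wronskian is $0_n$; the real-$k$ case follows by continuity. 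Assertion (d) is then immediate: if $f(k,\cdot)c_1+f(-k,\cdot)c_2\equiv0$ for constant column vectors, pairing with $f(\pm k)^\dagger$ in the Wronskian and invoking (\ref{10})--(\ref{11}) gives $2ik\,c_1=0$ and $-2ik\,c_2=0$, so $c_1=c_2=0$ whenever $k\neq0$; since the solution space of (\ref{1}) has dimension $2n$, the $2n$ columns of $f(k,\cdot),f(-k,\cdot)$ form a fundamental system.

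The heart of the matter is (e), where I expect the real work. Substituting the ansatz (\ref{6}) into (\ref{1}) and matching the $e^{ikt}$ content after two integrations by parts converts the differential equation into a characteristic (Goursat) problem for $K(x,t)$ on $\{t\ge x\ge0\}$, equivalent to a Volterra-type integral equation of the schematic form
\begin{equation*}
K(x,t)=\frac12\int_{\frac{x+t}{2}}^{\infty}V(s)\,ds+\iint_{\Delta(x,t)}V(\alpha)\,K(\alpha,\beta)\,d\alpha\,d\beta,
\end{equation*}
with $\Delta(x,t)$ the associated characteristic triangle and boundary relation $K(x,x)=\tfrac12\int_x^\infty V(s)\,ds$, which yields (\ref{8}) upon differentiation. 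Solving this by successive approximations and estimating each iterate against $\sigma\big(\tfrac{x+t}{2}\big)$ gives (\ref{7}), and the membership $\sigma\in L^p(\mathbb{R}^+)$ for $p=1,2$ follows from $V\in L_1^1$. The derivative bound (\ref{9}) is obtained by differentiating the integral equation, isolating the principal singular contribution $-\tfrac14 V\big(\tfrac{x+t}{2}\big)$, and bounding the remainder by $\sigma(x)\sigma\big(\tfrac{x+t}{2}\big)$. The main obstacle is precisely this last estimate: controlling $K_x$ and $K_t$ while separating out the leading term demands careful bookkeeping of the differentiated kernels, and it is here that the noncommutative matrix structure together with the mere $L_1^1$ regularity of $V$ makes the analysis most delicate.
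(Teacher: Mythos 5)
The paper offers no proof of Proposition \ref{2.1}: it is stated as a recollection of known results and referred wholesale to \cite{AM,AKW0,AKW}. Your outline reconstructs precisely the standard argument underlying those references --- the Volterra/Neumann iteration for the reduced equation $g(k,x)=e^{-ikx}f(k,x)$ giving analyticity, continuity and the large-$|k|$ and large-$x$ asymptotics in (a)--(b); constancy of the Wronskian evaluated at $x\to+\infty$ for (c) and hence (d); and the Goursat problem for the transformation kernel with successive approximations for (e) --- so in substance it agrees with the proof the paper is implicitly invoking, and the logic is sound. One point you should not have glossed over: your first iterate correctly produces the non-oscillatory term $-\frac{1}{2ik}\int_x^\infty V(t)\,dt=-\frac{Q(x)}{ik}$, yet you assert this matches the "advertised'' $+\frac{Q(x)}{ik}$ in (\ref{an}). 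The minus sign is the correct one --- it is what integration by parts in (\ref{6}) gives using $K(x,x)=Q(x)$ from (\ref{8}), and it is the sign needed for the Jost-matrix asymptotics (\ref{14}) to come out as stated --- so (\ref{an}) as printed carries a sign typo, which your write-up silently endorses instead of flagging. Apart from that, the only places where more than bookkeeping is required are the ones you already identify, namely the uniformity of the $o(1/k)$ term over $\overline{\mathbb{C}}^+$ (dominated convergence off the real axis, Riemann--Lebesgue on it) and the derivative estimate (\ref{9}), both of which are handled exactly as you describe in \cite{AM}.
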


Let $\varphi(k,x)$ be the solution to (\ref{1}) satisfying the initial condition
\begin{equation}\label{j3}
  \varphi(k,0)=A,\quad \varphi'(k,0)=B.
\end{equation}
It is known that $\varphi(k,x)$ satisfies the integral equation
\begin{equation*}
\varphi(k,x)=A\cos kx+B\frac{\sin kx}{k}+\int_0^x\frac{\sin k(x-t)}{k}V(t)\varphi(k,t)dt,
\end{equation*}
from which it follows that for each $x\ge0$, the function $\varphi(k,x)$ is analytic in $\mathbb{C}$, and satisfies that as $|k|\to\infty$ in $\mathbb{C}$,
\begin{equation}\label{j2}
\varphi(k,x)=A\cos kx+B\frac{\sin kx}{k}+\int_0^x \frac{\sin k(x-t)\cos kt}{k}V(t)dtA+O\left(\frac{e^{|{\rm Im}k|x}}{k^2}\right).
\end{equation}

Define the Jost matrix \cite{AKW,AKW1,AW,MH1,MH2,MH3}
\begin{equation}\label{12}
J(k):=[f(-\bar{k},x)^\dagger;\varphi(k,x)]=f(-\bar{k},0)^\dagger B-f'(-\bar{k},0)^\dagger A.
\end{equation}

Let us recall that the \emph{eigenvalue} of the problem $L(V,U)$ is the $k^2$-value for which (\ref{1}) has a nonzero column vector solution $\Psi\in L^2(\mathbb{R}^+)$ satisfying (\ref{3}), and the corresponding solution is called an \emph{eigenfunction}. By saying that $k^2$ is an eigenvalue with multiplicity $m_k$, we mean that there are $m_k$ eigenfunctions, which are linearly independent, corresponding to the eigenvalue $k^2$.

Let us summarize the relations between the eigenvalues and the Jost matrix in the following proposition, as well as the asymptotics of the Jost matrix (see \cite{AKW,AW} for details).
\begin{prop}\label{2.2}
Assume $V\in L_1^1(\mathbb{R}^+)$ and satisfies (\ref{2}). Then the Jost matrix $J(k)$ defined in (\ref{12}) is analytic in $\mathbb{C}^+$ and continuous in $\overline{\mathbb{C}}^+$, and satisfies the following conclusions.

(a) $J(k)$ is invertible for $k\in \mathbb{R}\setminus\{0\}$, and it is possible to have $\det J(0)=0$.

(b) The function $\det J(k)$ has a finite number of zeros in $\mathbb{C}^+$, which coincide with the eigenvalues of the problem $L(V,U)$ (including multiplicity), and all of them lie on the positive imaginary axis $i\mathbb{R}^+$;

(c) If $\det J(i\kappa)=0$ with $\kappa>0$ then $J(k)^{-1}$ has a simple pole at $k=i\kappa$, namely,
\begin{equation}\label{13}
 J(k)^{-1}=\frac{N_{-,\kappa}}{k-i\kappa}+N_{0,\kappa}+O(k-i\kappa),\quad k\to i\kappa\; \text{ in }\; \overline{\mathbb{C}}^+,
\end{equation}
\begin{equation}\label{j4}
  J(k)=J(i\kappa)+\dot{J}(i\kappa)(k-i\kappa)+O((k-i\kappa)^2),\quad k\to i\kappa,\; \text{ in }\; \overline{\mathbb{C}}^+,
\end{equation}
where
\begin{equation}\label{j5}
\dot{J}(i\kappa)=\dot{f}'(i\kappa,0)^\dagger A-\dot{f}(i\kappa,0)^\dagger B,
\end{equation}
and the multiplicity of the eigenvalue $-\kappa^2$ of the problem $L(V,U)$  is equal to $\dim \ker(J(i\kappa))$, and
\begin{equation}\label{j6}
A_\kappa:=\int_0^\infty f(i\kappa,x)^\dagger f(i\kappa,x)dx=\frac{\dot{f}'(i\kappa,0)^\dagger f(i\kappa,0)-\dot{f}(i\kappa,0)^\dagger f'(i\kappa,0)}{-2i\kappa};
\end{equation}
(d) As $|k|\to\infty$  in $\overline{\mathbb{C}}^+$, we have
\begin{equation}\label{14}
J(k)^{-1}=O\left(1\right),\quad J(k)=-ik A+B+Q(0)A+o(1).
\end{equation}
\end{prop}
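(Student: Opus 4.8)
The plan is to read every assertion off the Wronskian representation (\ref{12}) of $J(k)$ together with the properties of the Jost solution collected in Proposition \ref{2.1}. Analyticity and continuity come first: by Proposition \ref{2.1}(a) the maps $k\mapsto f(k,0)$ and $k\mapsto f'(k,0)$ are analytic on $\mathbb{C}^+$ and continuous on $\overline{\mathbb{C}}^+$, and composing with the antiholomorphic involution $k\mapsto-\bar k$ and then taking adjoints preserves holomorphy, so $f(-\bar k,0)^\dagger$ and $f'(-\bar k,0)^\dagger$ enjoy the same regularity; hence so does $J(k)=f(-\bar k,0)^\dagger B-f'(-\bar k,0)^\dagger A$.

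For (a) I argue by contradiction. If $J(k)c=0$ for some real $k\neq0$ and some $c\neq0$, put $u(x):=\varphi(k,x)c$; this solves (\ref{1}) and obeys (\ref{3}) because $\varphi$ does. Expanding $u$ in the fundamental system $\{f(k,\cdot),f(-k,\cdot)\}$ of Proposition \ref{2.1}(d) and using (\ref{10})--(\ref{11}) (with $-\bar k=-k$ on the real axis), the identity $J(k)c=[f(-k,x)^\dagger;u]$ forces the $f(-k,\cdot)$-coefficient of $u$ to vanish, so $u=f(k,\cdot)a$. The Wronskian $[u^\dagger;u]=2ik\,a^\dagger a$ is then independent of $x$, while at $x=0$ one has $u(0)=Ac$, $u'(0)=Bc$ and hence $[u^\dagger;u]=c^\dagger(A^\dagger B-B^\dagger A)c=0$ by (\ref{4a}); therefore $a=0$, $u\equiv0$, and $(A^\dagger A+B^\dagger B)c=0$ contradicts the positivity in (\ref{4a}). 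The companion fact that $\det J(0)=0$ can occur (a zero-energy resonance) I would illustrate by an explicit potential rather than try to exclude.

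For (b) I identify the eigenvalues with the zeros of $\det J$. The boundary condition (\ref{4a}) makes $\{(Ac,Bc):c\in\mathbb{C}^{n,T}\}$ the full space of admissible Cauchy data, so any solution obeying (\ref{3}) equals $\varphi(k,x)c$ for some $c$; for $\mathrm{Im}\,k>0$ it lies in $L^2$ exactly when it is built from the decaying columns of $f(k,x)$. Since $f(-\bar k,x)^\dagger\sim e^{ikx}I_n$ decays and $[f(-\bar k,x)^\dagger;f(k,x)]=0$ by (\ref{11}), the constant Wronskian $J(k)c=[f(-\bar k,x)^\dagger;\varphi(k,x)c]$ vanishes precisely when the growing part of $\varphi(k,x)c$ is absent, i.e. precisely when $\varphi(k,x)c$ is an eigenfunction; as $c\mapsto\varphi(k,x)c$ is injective (its Cauchy data vanish only for $c=0$), this is a linear isomorphism between $\ker J(i\kappa)$ and the eigenspace, giving multiplicity $=\dim\ker J(i\kappa)$. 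Reality of $k^2$ is the self-adjointness computation: pairing $-u''+Vu=k^2u$ with $u$ over $\mathbb{R}^+$ and subtracting the conjugated identity, the boundary term at $0$ vanishes by (\ref{4a}), the one at $\infty$ by the decay (\ref{yyy}), and the potential term by $V=V^\dagger$, leaving $(k^2-\overline{k^2})\|u\|_{L^2}^2=0$; with $\mathrm{Im}\,k>0$ this forces $k\in i\mathbb{R}^+$. Finiteness is then the finiteness of the bound-state count for an $L_1^1$ potential: the eigenvalues are the isolated zeros of the analytic, not-identically-zero $\det J$, confined to a bounded part of $i\mathbb{R}^+$ by the bounded-below self-adjoint operator and the asymptotics of part (d).

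Parts (c) and (d) carry the analytic weight. For (d) I substitute the expansions (\ref{an}) and $f'(k,0)=ik\,I_n+Q(0)+o(1)$ into (\ref{12}), evaluate the adjoints at argument $-\bar k$, and collect powers of $k$ to obtain the stated leading behaviour of $J(k)$; the bound $J(k)^{-1}=O(1)$ then follows because $-ikA$ is invertible on $(\ker A)^\perp$ while on $\ker A$---where $Uv=-v$ forces $Bv=-iv$---the surviving term of $J(k)$ is the invertible $B|_{\ker A}$, so no direction of $J(k)$ collapses as $|k|\to\infty$. The genuine obstacle is the simple-pole assertion of (c): a priori $J(k)^{-1}=\mathrm{adj}\,J(k)/\det J(k)$ might have a pole whose order equals the multiplicity of the zero of $\det J$ at $i\kappa$. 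To rule this out I would first prove the normalization identity (\ref{j6}) by the energy-derivative trick: differentiating (\ref{1}) in $k$ and using $V=V^\dagger$ gives $\frac{d}{dx}\big[\dot f'(i\kappa,x)^\dagger f(i\kappa,x)-\dot f(i\kappa,x)^\dagger f'(i\kappa,x)\big]=2i\kappa\,f(i\kappa,x)^\dagger f(i\kappa,x)$, and integrating from $0$ to $\infty$---the boundary term at infinity vanishing by (\ref{yyy})---yields (\ref{j6}) and shows $A_\kappa=\int_0^\infty f(i\kappa,x)^\dagger f(i\kappa,x)\,dx$ is positive definite on the eigenspace. Transporting this positivity to the statement that $\dot J(i\kappa)$ in (\ref{j5}) maps $\ker J(i\kappa)$ isomorphically onto a complement of the range of $J(i\kappa)$ is exactly the semisimplicity that forbids Jordan chains; it makes the order of the zero of $\det J$ equal $\dim\ker J(i\kappa)$ (closing the multiplicity claim of (b)) and, upon matching Laurent coefficients in $J(k)J(k)^{-1}=I_n$ around $i\kappa$, shows the pole of $J(k)^{-1}$ is simple with residue $N_{-,\kappa}$ as in (\ref{13})--(\ref{j4}). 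I expect this passage---reading the nondegeneracy of $\dot J(i\kappa)$ on $\ker J(i\kappa)$ off the positive-definite $A_\kappa$, i.e. \emph{self-adjointness forcing semisimplicity}---to be the crux of the whole proposition.
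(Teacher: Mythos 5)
First, a point of reference: the paper does not prove Proposition \ref{2.2} at all --- it is explicitly presented as a summary of results imported from Aktosun--Klaus--Weder \cite{AKW} and Aktosun--Weder \cite{AW}, with the reader sent to those papers ``for details.'' So there is no in-paper proof to compare against; your proposal is a from-scratch reconstruction of the arguments in those references. Your outline follows the standard route and most of it is sound: the regularity of $J$, the Wronskian contradiction for (a), the self-adjointness computation placing eigenvalues on $i\mathbb{R}^+$, the large-$k$ expansion for (d), and the energy-derivative identity
$\frac{d}{dx}\bigl[\dot f'(i\kappa,x)^\dagger f(i\kappa,x)-\dot f(i\kappa,x)^\dagger f'(i\kappa,x)\bigr]=2i\kappa f(i\kappa,x)^\dagger f(i\kappa,x)$
for (\ref{j6}) all check out (the latter using $\bar k=-k$ and $\bar k^2=k^2$ at $k=i\kappa$).

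Two steps, however, are genuinely incomplete. (1) Finiteness of the eigenvalue count in (b): your argument excludes accumulation at infinity via part (d), but zeros of $\det J$, analytic only on the open set $\mathbb{C}^+$, may a priori accumulate at the boundary point $k=0$, where $\det J$ is merely continuous and may vanish. Ruling this out is precisely the content of the small-energy analysis of \cite{AKW} (cf.\ the expansion (\ref{3e}) quoted later in the paper, which shows $\det J(k)\sim c\,k^{\mu}$ with $c\neq 0$ near $k=0$); without some such input the finiteness claim does not follow from isolation of zeros alone. (2) The crux of (c), which you correctly identify, is asserted rather than executed: to convert positivity of $A_\kappa$ into invertibility of the compression of $\dot J(i\kappa)$ between $\ker J(i\kappa)$ and a complement of $\operatorname{ran}J(i\kappa)$, you need the intertwining identity relating vectors in $\ker J(i\kappa)^\dagger$ to bound states of the form $f(i\kappa,\cdot)v$ --- essentially the computation the paper itself carries out later as (\ref{25})--(\ref{26}) and the chain ending in (\ref{30}), which shows $N_{-,\kappa}\dot J(i\kappa)\,U^\dagger[f(i\kappa,0)-if'(i\kappa,0)]P=-2i\kappa N_{-,\kappa}A_\kappa P$. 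Without that bridge, $A_\kappa>0$ does not directly touch $\dot J(i\kappa)$, and the simple-pole and multiplicity assertions remain open. A smaller remark: in (b), the equivalence ``$\varphi(k,\cdot)c\in L^2 \iff J(k)c=0$'' for $\operatorname{Im}k>0$ needs a second, exponentially growing fundamental system alongside $f(k,\cdot)$ in $\mathbb{C}^+$, which Proposition \ref{2.1}(d) supplies only for real $k$; this is standard but should be stated.
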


Define the scattering matrix \cite{AKW,AKW1,AW,MH1,MH2,MH3}
\begin{equation}\label{15}
S(k)=-J(-k)J(k)^{-1},\quad k\in \mathbb{R}.
\end{equation}
The properties of the scattering matrix $S(k)$ are summarized as follows \cite{AKW,AW}.
\begin{prop}\label{2.3}
Assume $V\in L_1^1(\mathbb{R}^+)$ and satisfies (\ref{2}). Then the following assertions hold.

(a) The scattering matrix $S(k)$ defined in (\ref{15}) is  continuous on $\mathbb{R}$ and satisfies
\begin{equation}\label{16}
 S(-k)=S(k)^\dagger=S(k)^{-1},\quad k\in\mathbb{R},
\end{equation}
and
\begin{equation}\label{17}
S(k)={U_0}+O\left(\frac{1}{k}\right),\quad |k|\to\infty,
\end{equation}
where $U_0$ is a unitary Hermitian matrix.

(b) If the unitary matrix $U$ appearing in (\ref{4}) has the eigenvalue $-1$ with multiplicity $n_D$, then
the matrix $U_0$ has the eigenvalue $1$ with multiplicity $n-n_D$ and the eigenvalue $-1$ with  multiplicity $n_D$.

(c) As $|k|\to\infty$ in $\overline{\mathbb{C}}^+$, we have
\begin{equation}\label{oa}
AJ(k)^{-1}=-\frac{1}{2ik}[I_n+{U_0}]+O\left(\frac{1}{k^2}\right),\; BJ(k)^{-1}=\frac{1}{2}[I_n-{U_0}]+O\left(\frac{1}{k}\right).
\end{equation}

(d) For some unitary matrix $M$, if $M^\dag UM= -{\rm diag}\{e^{-2i\theta_1},...,e^{-2i\theta_{n-n_D}},I_{n_D}\},$ $\theta_j\in(0,\pi),\;j=\overline{1,n-n_D}$,  then $ U_0=M{\rm diag} \{I_{n-n_D},-I_{n_D}\}M^\dag$. Conversely, if $M^\dag U_0M={\rm diag} \{I_{n-n_D},-I_{n_D}\}$ for some unitary matrix $M$, then $M^\dag U M={\rm diag} \{X_1,-I_{n_D}\}$ for some $(n-n_D)\times (n-n_D)$ invertible matrix $X_1$, and $M^\dag(I_n+U)MT={\rm diag} \{I_{n-n_D},0_{n_D}\}$ for the invertible matrix $T={\rm diag} \{(X_1+I_{n-n_D})^{-1},\frac{1}{2}I_{n_D}\}$.

(e) There exist invertible matrices $T_1, M_1$ such that
\begin{equation}\label{3e}
 M_1 J(k)T_1=
 \begin{bmatrix}
kI_\mu+o(k) & o(k) \\ o(k) & I_{n-\mu}+o(1)
\end{bmatrix}, \quad k\to0,\;\;k\in \overline{\mathbb{C}}^+,
\end{equation}
and
\begin{equation}\label{3e1}
 M_1 S(0) M_1^{-1}=  \begin{bmatrix}
I_\mu & 0 \\0 & -I_{n-\mu}
\end{bmatrix},
\end{equation}
where $\mu\ge0$ is the  multiplicity of the eigenvalue 1 of the matrix $S(0)$.
\end{prop}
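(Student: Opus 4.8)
The plan is to handle the five assertions with three tools: Wronskian identities for the symmetry in (a), the expansion of $\varphi(k,x)$ in the Jost basis together with the asymptotics of Proposition \ref{2.2} for the large-$|k|$ statements in (a), (b), (c), and the spectral decomposition of $U$ for (b) and (d); the behaviour at $k=0$ in (e) will be the hard part. The representation that drives most of the argument comes first. For real $k\neq0$ the pair $f(k,\cdot),f(-k,\cdot)$ is a fundamental system, so $\varphi(k,x)=f(k,x)C_1+f(-k,x)C_2$; reading off $C_1,C_2$ from the Wronskian relations \eqref{10} and \eqref{11} gives
\begin{equation*}
\varphi(k,x)=\frac{1}{2ik}\bigl[f(k,x)J(-k)-f(-k,x)J(k)\bigr].
\end{equation*}
The algebraic relation $S(-k)=S(k)^{-1}$ is immediate from \eqref{15}. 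For unitarity I would prove the equivalent identity $J(-k)^\dagger J(-k)=J(k)^\dagger J(k)$ by computing the $x$-independent Wronskian $[\varphi(k,\cdot)^\dagger;\varphi(k,\cdot)]$ in two ways: at $x=0$ it equals $A^\dagger B-B^\dagger A=0$ by \eqref{4a} and \eqref{j3}, while substituting the representation and using \eqref{10} and \eqref{11} collapses it to a multiple of $J(-k)^\dagger J(-k)-J(k)^\dagger J(k)$. This yields $S(k)^\dagger=S(-k)=S(k)^{-1}$, i.e.\ \eqref{16}; continuity of $S$ on $\mathbb{R}\setminus\{0\}$ follows from Proposition \ref{2.2}, continuity at $0$ being postponed to (e).

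Writing $J(-k)=-S(k)J(k)$ turns the representation into $\varphi(k,x)J(k)^{-1}=-\frac{1}{2ik}[f(k,x)S(k)+f(-k,x)]$. Evaluating at $x=0$, using $\varphi(k,0)=A$, $\varphi'(k,0)=B$ and the expansion $f(k,0)=I_n+\frac{Q(0)}{ik}+o(1/k)$ from \eqref{an} together with the companion expansion $f'(k,0)=ikI_n+Q(0)+o(1)$, produces (c) as soon as $S(k)\to U_0$ is known. To identify $U_0$ I would diagonalize $U$ by a unitary $M$ and pass to $\widetilde J=M^\dagger JM$. Since $A=\frac12(U+I_n)$ and $B=\frac{i}{2}(U-I_n)$, in the eigenbasis $A$ is invertible off the $(-1)$-eigenspace of $U$ and vanishes on it while $B$ acts as $-iI_{n_D}$ there; feeding this into $J(k)=-ikA+B+Q(0)A+o(1)$ from \eqref{14} and inverting block-wise (the off-diagonal coupling produced by $Q(0)A$ being of lower order) shows $\widetilde J(k)^{-1}$ is $O(1/k)$ on the complement and $O(1)$ on the Dirichlet block, whence $M^\dagger S(k)M\to\mathrm{diag}\{I_{n-n_D},-I_{n_D}\}$. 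This simultaneously gives \eqref{17}, the Hermitian-unitary matrix $U_0=M\,\mathrm{diag}\{I_{n-n_D},-I_{n_D}\}M^\dagger$, and the multiplicity count (b).

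Assertion (d) is then linear algebra. The forward direction is exactly the computation just described, with $M$ chosen to bring $U$ into the stated diagonal form. For the converse, $U$ and $U_0$ are simultaneously diagonalizable, the $(-1)$-eigenspace of $U$ coinciding with the $(-1)$-eigenspace of $U_0$ and the $1$-eigenspace of $U_0$ being its complement; writing $U$ in a basis $M$ that diagonalizes $U_0$ gives $M^\dagger UM=\mathrm{diag}\{X_1,-I_{n_D}\}$ with $X_1$ and $X_1+I_{n-n_D}$ invertible (no eigenvalue $-1$), so that $M^\dagger(I_n+U)M=\mathrm{diag}\{X_1+I_{n-n_D},0_{n_D}\}$ and right multiplication by the stated invertible $T$ yields $\mathrm{diag}\{I_{n-n_D},0_{n_D}\}$.

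The main obstacle is (e), where $J(0)$ may be singular. The heart of the matter is constructing the constant invertible $T_1,M_1$ realizing the factorization \eqref{3e}. I would take the first $\mu$ columns of $T_1$ to be a basis of $\ker J(0)$, so those columns of $J(k)T_1$ vanish to first order and contribute the $kI_\mu$ block through $\dot{J}(0)$, and complete to a basis with the remaining columns; $M_1$ is then chosen to carry the range of $J(0)$ onto the last $n-\mu$ coordinates. The delicate point, precisely where the detailed small-$k$ analysis of \cite{AKW,AW} is indispensable, is the transversality statement that $\dot{J}(0)$ maps $\ker J(0)$ isomorphically onto a complement of the range of $J(0)$, i.e.\ that $k=0$ is at worst a first-order degeneracy, so that the normalized diagonal blocks are exactly $I_\mu$ and $I_{n-\mu}$. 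Granting \eqref{3e}, the identity $M_1S(k)M_1^{-1}=-\widehat J(-k)\widehat J(k)^{-1}$ with $\widehat J=M_1JT_1$ gives \eqref{3e1} by a short block computation as $k\to0$, which also supplies the continuity of $S$ at $0$ left open in (a) and identifies $\mu$ with the multiplicity of the eigenvalue $1$ of $S(0)$.
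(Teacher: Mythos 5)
Your overall architecture is more self-contained than the paper's: the paper proves almost nothing from scratch here, instead citing specific formulas of Aktosun--Klaus--Weder ((4.8), (4.11)--(4.13), (5.7), (5.9)--(5.10), (7.15)--(7.16), (7.20) of \cite{AW} for (a)--(d), and (6.21)--(6.23) of \cite{AKW} for (e)) and then doing only the linear algebra: deriving $(I_n-U_0)(U+I_n)=0$ by comparing (\ref{14}) with (\ref{oa}), exploiting unitarity of $M^\dag UM$ via the block-inversion identity (\ref{e11}) for (d), and reducing the block form (\ref{3ex}) to (\ref{3e})--(\ref{3e1}) by explicit triangular factors. Your Wronskian derivation of (\ref{16}) and your use of the representation $\varphi(k,x)=\frac{1}{2ik}[f(k,x)J(-k)-f(-k,x)J(k)]$ are legitimate and would replace some of those citations on the real line.

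However, there are concrete gaps. First, the identity $-2ik\varphi(k,x)J(k)^{-1}=f(-k,x)+f(k,x)S(k)$ lives only on $\mathbb{R}$, since $f(-k,\cdot)$ and $S(k)$ have no meaning for $k\in\mathbb{C}^+$; so your evaluation at $x=0$ can only yield (\ref{oa}) for real $k$, whereas the statement (and its later use in (\ref{xvs}), where the integrand must be controlled on the semicircular contour $C_r$) requires the asymptotics in all of $\overline{\mathbb{C}}^+$. This cannot be repaired from (\ref{14}) alone because $A$ may be singular; it is exactly the content of the refined expansions in \cite{AW} that the paper imports. Second, your block inversion of $-ikA+B+Q(0)A+o(1)$ gives $S(k)=U_0+o(1)$, not the rate $O(1/k)$ claimed in (\ref{17}) (and needed later for $S(k)-U_0\in L^2(\mathbb{R})$); the $o(1)$ error in (\ref{14}) is too weak to produce it. Third, in the converse half of (d) you assert that $U$ and $U_0$ are simultaneously diagonalizable with coinciding $(-1)$-eigenspaces; that is essentially the conclusion. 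The paper instead derives $X_3=0$, $X_4=-I_{n_D}$ from $(I_n-U_0)(U+I_n)=0$ and then forces $X_2=0$ from unitarity via (\ref{e11}); some such argument is needed, since commutativity of $U$ and $U_0$ is not given a priori. Finally, in (e) you invoke $\dot J(0)$, but $J$ is only continuous at $k=0$ (analytic only in the open half-plane), so the correct input is the $o(k)$ block expansion (\ref{3ex}) of \cite{AKW}; you do flag this as the point where the small-energy analysis is indispensable, which matches the paper, but the construction should be phrased through that block form rather than through a derivative at $0$.
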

\begin{proof}
The assertion (a) has been proved in \cite{AW}. We only show  the assertions (b), (c), (d) and (e).

(b) From Eqs.(4.8), (4.11)$-$(4.13), (5.7), (5.10) and (7.20) in \cite{AW}, we see that if the matrix $(B-iA)(B+iA)^{-1}$ has the eigenvalue $1$ with multiplicity $n_D$, then the limit ${U_0}$ of $S(k)$ as $|k|\to\infty$ on $\mathbb{R}$ has the eigenvalue $-1$ with  multiplicity $n_D$ and the eigenvalue $1$ with  multiplicity $n-n_D$. (One should observe that the notation $U$ used in (4.8) in \cite{AW} is not the same as that in our paper.) Using (\ref{4}) in our paper, one gets $(B-iA)(B+iA)^{-1}=-U^\dagger$. It is obvious that if $1$ is the eigenvalue of $-U^\dagger$ with  multiplicity $n_D$ then $-1$ is the eigenvalue of $U$ with  multiplicity $n_D$. We have proved the assertion (b).

(c) From the formulas (5.7), (5.9), (7.15) (or (7.16)) and (7.20) in \cite{AW}, we obtain the first equation in (\ref{oa}). There is no direct formula in \cite{AW} which can be used to obtain the second equation in (\ref{oa}) like the first one. However, one can use a similar way that the authors of \cite{AW} used in Eq.(5.9) of \cite{AW} to get the asymptotics of $BJ_0(k)^{-1}$. Then one can use the same formulas (7.16) and (7.20) in \cite{AW} to obtain the second equation in (\ref{oa}) here.

(d) Firstly, from (\ref{14}) and (\ref{oa}), we obtain
\begin{equation*}
  A=\frac{1}{2}(I_n+U_0)A+O\left(\frac{1}{k}\right),\;B=-\frac{ik}{2}(I_n-U_0)A+O(1),\quad |k|\to\infty,\;\;k\in \overline{\mathbb{C}}^+,
\end{equation*}
which implies $(I_n-U_0)A=0$, or equivalently, from (\ref{4}) that
\begin{equation}\label{rwq}
(I_n -U_0)(U+I_n)=0.
\end{equation}

From Eqs.(4.8), (4.11) (5.7) and (5.10) in \cite{AW}, together with the proof of (b) above, we know that if $U$ satisfies the assumption in (d), then $M^\dag U_0M $ is a diagonal matrix with diagonal entries either $1$ or $-1$. Furthermore, it follows from (\ref{rwq}) that $M^\dag U_0M ={\rm diag} \{I_{n-n_D},-I_{n_D}\}$.

Conversely, if $M^\dag U_0M={\rm diag} \{I_{n-n_D},-I_{n_D}\}$ for some unitary matrix $M$, then from (\ref{rwq}) we get
\begin{equation}\label{jko}
\begin{bmatrix}
0_{n-n_D}&0_{(n-n_D) \times n_D}\\0_{n_D\times (n-n_D)  }&I_{n_D}
\end{bmatrix}
\begin{bmatrix}
I_{n-n_D}+X_1 &X_2 \\ X_3 & I_{n_D}+X_4
\end{bmatrix}=0,
\end{equation}
where
\begin{equation*}
  \begin{bmatrix}
X_1 &X_2 \\ X_3 &X_4
\end{bmatrix}=M^\dag U M.
\end{equation*}
It follows from (\ref{jko}) that $X_3=0$ and $X_4=-I_{n_D}$.
Since $M^\dag U M$ is a unitary matrix, we have
\begin{equation}\label{unv}
\begin{bmatrix}
X_1 &X_2 \\ X_3& X_4
\end{bmatrix}^{-1}=\begin{bmatrix}
X_1^\dag &X_3^\dag \\ X_2^\dag & X_4^\dag
\end{bmatrix}.
\end{equation}
Let us recall the assertion $(\ast)$: if the matrices $\begin{bmatrix}
X_1 &X_2 \\ X_3 & X_4
\end{bmatrix}$ and $X_4$ are invertible then the matrix $X:=X_1-X_2X_4^{-1}X_3$ is invertible and
\begin{equation}\label{e11}
\begin{bmatrix}\!
X_1 &X_2 \\ X_3 & X_4
\end{bmatrix}^{-1}\!\!\!=\begin{bmatrix}
 X^{-1}&-X^{-1}X_2X_4^{-1} \\ -X_4^{-1}X_3X^{-1} & X_4^{-1}X_3X^{-1}X_2X_4^{-1}+X_4^{-1}\!\end{bmatrix}.
\end{equation}
Using (\ref{e11}), together with $X_3=0$ and $X_4=-I_{n_D}$, we have from (\ref{unv}) that $X_2=0$. Thus, $M^\dag UM={\rm diag} \{X_1,-I_{n_D}\}$. Note that the matrix $I_{n-n_D}+X_1$ must be invertible. Otherwise, $U$ has the eigenvalue $-1$ with multiplicities more than $n_D$, which implies from (b) that $U_0$ has the eigenvalue $-1$ multiplicities also more than $n_D$. This leads to a contradiction with the assumption $M^\dag U_0M={\rm diag} \{I_{n-n_D},-I_{n_D}\}$.
Denote $T:={\rm diag} \{(X_1+I_{n-n_D})^{-1},\frac{1}{2}I_{n_D}\}$. Then we get $M^\dag (I_n+U)MT={\rm diag} \{I_{n-n_D},0_{n_D}\}$.

(e) From Eqs.(6.21)$-$(6.23) in \cite{AKW}, we know that there exist invertible matrices $M_0$ and $T_0$ such that
\begin{equation}\label{3ex}
 M_0 J(k)T_0=
 \begin{bmatrix}
kA_0+o(k) & kB_0+o(k) \\ kC_0+o(k) & D_0+o(1)
\end{bmatrix}, \;\; k\to0,\;\;k\in \overline{\mathbb{C}}^+,
\end{equation}
where $A_0$ and $D_0$ are $\mu\times\mu$ and $(n-\mu)\times(n-\mu)$ invertible constant matrices, respectively, and $B_0$ and $C_0$ are constant matrices. Postmultiplying both sides of (\ref{3ex}) by the matrix $T_2:={\rm diag}\{A_0^{-1}, D_0^{-1}\}$, we obtain
\begin{equation}\label{x.1}
   M_0 J(k)T_0 T_2=
\begin{bmatrix}
kI_\mu+o(k) & kB_0D_0^{-1}+o(k) \\ kC_0A_0^{-1}+o(k) & I_{n-\mu}+o(1)
\end{bmatrix},\;\; k\to0,\;\;k\in \overline{\mathbb{C}}^+.
\end{equation}
It is obvious that the right hand side of (\ref{x.1}) is invertible for $k=1$. Thus we obtain from the above assertion $(\ast)$ that the matrix $A_0':=I_\mu-B_0D_0^{-1}C_0A_0^{-1}$ is invertible.
Denote
\begin{equation*}
 T_3:=\begin{bmatrix}
I_\mu & -B_0D_0^{-1}\\  0& I_{n-\mu}
\end{bmatrix},\; T_4:=\begin{bmatrix}
A_0'^{-1} & 0\\  0& I_{n-\mu}
\end{bmatrix},\;T_5:=\begin{bmatrix}
I_\mu & 0 \\  -C_0A_0^{-1}& I_{n-\mu}
\end{bmatrix}.
\end{equation*}
Premultiplying  both sides of (\ref{x.1}) by $T_3$ and $T_4$, successively, and then postmultiplying both sides of (\ref{x.1}) by $T_5$,
 we get
 \begin{equation}\label{x.2}
  T_4T_3 M_0 J(k)T_0 T_5=
\begin{bmatrix}
kI_\mu+o(k) & o(k) \\ o(k) & I_{n-\mu}+o(1)
\end{bmatrix},\;\; k\to0,\;\;k\in \overline{\mathbb{C}}^+,
\end{equation}
which implies (\ref{3e}) with $M_1:=T_4T_3M_0$ and $T_1:=T_0 T_5$.
 Using the formula (\ref{e11}), we obtain
  \begin{equation}\label{x.3}
  T_5^{-1}T_0^{-1} J(k)^{-1}M_0^{-1}T_3^{-1} T_4^{-1}=
\begin{bmatrix}
\frac{1}{k}(I_\mu+o(1)) & o(1) \\ o(1) & I_{n-\mu}+o(1)
\end{bmatrix},\;\; k\to0,\;\;k\in \overline{\mathbb{C}}^+.
\end{equation}
 Together with (\ref{x.2}), (\ref{x.3}) and (\ref{15}), we get (\ref{3e1}).
\end{proof}

\section{Derivation of the Marchenko equation}
In this section, we shall deduce the Marchenko equation and introduce the so called normalization matrix. We note that Harmer \cite{MH3} has deduced the Marchenko equation, whereas, it is not rigorous. In this paper, with the help of (\ref{j2}) and some behaviors of the Jost solution $f(k,x)$, the Jost matrix $J(k)$ and the scattering matrix $S(k)$ presented in Propositions \ref{2.1}$-$\ref{2.3}, we will deduce the Marchenko equation rigorously.

From Proposition \ref{2.1} (c), we know that $\varphi(k,x)$ for $k\in\mathbb{R}\setminus\{0\}$ can be expressed in terms of the Jost solutions $f(\pm k,x)$. In fact, it has been proved in \cite{AW} that
\begin{equation}\label{bim}
  \varphi(k,x)=\frac{1}{2ik}f(k,x)J(-k)-\frac{1}{2ik}f(-k,x)J(k),\quad k\in\mathbb{R}\setminus\{0\}.
\end{equation}
Using (\ref{15}) and Proposition \ref{2.2} (a) and (c),  we can rewrite the above equation as
\begin{equation}\label{j1}
 -2ik \varphi(k,x)J(k)^{-1}=f(-k,x)+f(k,x)S(k),\quad k\in\mathbb{R}.
\end{equation}

From Proposition \ref{2.3}, we see that $S(k)-{U_0}\in L^2(\mathbb{R})$. Thus, by the theory of Fourier transformation, there exists  $F_S(\cdot)\in L^2(\mathbb{R})$ such that
\begin{equation}\label{18}
S(k)-{U_0}=\int_{-\infty}^\infty F_S(x)e^{-ikx}dx,\quad F_S(x)=\frac{1}{2\pi}\int_{-\infty}^\infty (S(k)-{U_0})e^{ikx}dk.
\end{equation}
Using (\ref{6}) and the first equation in (\ref{18}), and assuming $K(x,t)=0_n$ for $x>t$, we get
\begin{align}
\notag\!\!f(-k,x)+f(k,x)(S(k)-{U_0})=&e^{-ikx}I_n+\int_{-\infty}^\infty K(x,t)e^{-ikt}dt\\
\notag&+ e^{ikx}\int_{-\infty}^\infty F_S(s)e^{-iks}ds\\
&+\int_{-\infty}^\infty K(x,t)e^{ikt}dt\int_{-\infty}^\infty F_S(s)e^{-iks}ds.\label{19}
\end{align}
Multiplying both side of the above equation by $\frac{e^{iky}}{2\pi}$, integrating on $\mathbb{R}$ with respect to $k$, using the convolution theorem for the last term (since $K(x,\cdot)\in L^1(\mathbb{R})$ and $F_S(\cdot)\in L^2(\mathbb{R})$),
we obtain
\begin{align}
\notag\frac{1}{2\pi}\int_{-\infty}^\infty[f(-k,x)&+f(k,x)(S(k)-{U_0})-I_ne^{-ikx}]e^{iky}dk\\
=&K(x,y)+F_S(x+y)+\int_x^\infty K(x,t)F_S(t+y)dt.\label{20}
\end{align}
Here we have used that $K(x,\cdot)\in L^2(\mathbb{R}^+)$ which is from (\ref{7}).

Let us calculate the left hand side in the above equation. To this end, we shall first consider the following calculation.
Using the equations (\ref{an}), (\ref{j2}), (\ref{14}) and (\ref{oa}), we obtain
\begin{align}
\notag &-2ik\varphi(k,x)J(k)^{-1}-f(k,x){U_0}-e^{-ikx}I_n\\
\notag=&-2ik\cos kxAJ(k)^{\!-1}\!-2ik\!\frac{\sin kx}{k}BJ(k)^{\!-1}\!\\
\notag&-2ik\!\int_0^x\frac{\sin k(x-t)\cos kt}{k}V(t)dtAJ(k)^{\!-1}-\!f(k,x){U_0}-e^{-ikx}I_n\\
\notag&-2ik\left[\varphi(k,x)-\cos kxA\!-\!\frac{\sin kx}{k}B\! -\!\int_0^x\frac{\sin k(x-t)\cos kt}{k}V(t)dtA\right]J(k)^{-1}\\
\notag=&\cos kx[I_n+{U_0}]-i\sin kx[I_n-{U_0}]-e^{ikx}{U_0}-e^{-ikx}I_n+O\left(\frac{e^{x|{\rm Im}k|}}{k}\right)\\
=&O\left(\frac{e^{x|{\rm Im}k|}}{k}\right),\;|k|\to\infty,\;{\rm Im}k\ge0.\label{xvs}
\end{align}
Consider the contours $C_r:=\{k=re^{i\theta}:0\le\theta\le\pi\}$ and $\overline{C}_r:=C_r\cup[-r,r]$ with counter-clockwise direction. From (\ref{xvs}), we get
\begin{equation}\label{hh}
\left\|\int_{{C}_r}[-2ik\varphi(k,x)J(k)^{\!-1}\!-\!f(k,x){U_0}-e^{-ikx}I_n]{e^{iky}}dk\right\|\le c\int_{0}^\pi e^{r(x-y)\sin \theta}d\theta\to0
\end{equation}
as $r\to\infty$ if $y>x$. Since $[f(k,x){U_0}+e^{-ikx}I_n]e^{iky}$ is analytic inside the closed contour $\overline{C}_r$ and continuous on the boundary $\overline{C}_r$, we have
\begin{equation}\label{loa}
 \int_{\overline{C}_r}[f(k,x){U_0}+e^{-ikx}I_n]e^{iky}dk=0.
\end{equation}
Assume that the function $\det J(k)$ has $N$ different zeros on $i\mathbb{R}^+$, denoted by $\{ik_j\}_{j=1}^N$ with $0<k_1<k_2<...<k_N$. Using Proposition \ref{2.2} (c) with $\kappa=k_j$, and the residue theorem, together with (\ref{hh}) and (\ref{loa}), we obtain
\begin{align}
\notag\frac{1}{2\pi}&\int_{-\infty}^\infty[-2ik\varphi(k,x)J(k)^{\!-1}\!-\!f(k,x){U_0}-e^{-ikx}I_n]{e^{iky}}dk\\
=&\notag\frac{1}{2\pi}\lim_{r\to\infty}\left(\int_{\overline{C}_r}-\int_{{C}_r}\right)[-2ik\varphi(k,x)J(k)^{\!-1}\!-\!f(k,x){U_0}-e^{-ikx}I_n]{e^{iky}}dk\\
=&\notag\frac{1}{2\pi i}\lim_{r\to\infty}\int_{\overline{C}_r}2k \varphi(k,x)J(k)^{-1}e^{iky}dk\\
=&2i\sum_{j=1}^Nk_j\varphi(ik_j,x)N_{-,k_j}e^{-k_jy},\label{24}
\end{align}
where $N_{-,k_j}$ appear in (\ref{13}) with $\kappa=k_j$.

From (\ref{j1}), we see that the left hand side of the equation (\ref{20}) is equal to the left hand side of the equation (\ref{24}). Therefore, we obtain that for $y>x$
\begin{equation}\label{22}
 K(x,y)+F_S(x+y)\!+\!\int_x^\infty \!\!K(x,t)F_S\!(t+y)dt\!= \!2i\sum_{j=1}^Nk_j\varphi(ik_j,x)N_{-,k_j}e^{-k_jy}.
\end{equation}

 Let $P_j$ be the orthogonal projection onto $\ker J(ik_j)^\dagger$, $j=\overline{1,N}$.
Using (\ref{4}) and (\ref{12}), by a direct calculation, one can verify
 \begin{equation}\label{25}
\!\! \! \left\{\begin{split}
  AU^\dagger[f(ik_j,0)\!-\!if'(ik_j,0)]P_j&=[iJ(ik_j)^\dagger\!+\!f(ik_j,0)]P_j=f(ik_j,0)P_j,\\
    BU^\dagger[f(ik_j,0)\!-\!if'(ik_j,0)]P_j&=[J(ik_j)^\dagger\!+\!f'(ik_j,0)]P_j=f'(ik_j,0)P_j,
 \end{split}\right.
 \end{equation}
 which implies from (\ref{j3}) that
 \begin{equation}\label{26}
\varphi(ik_j,x)U^\dagger[f(ik_j,0)-if'(ik_j,0)]P_j=f(ik_j,x)P_j,
 \end{equation}
and from (\ref{12}) and (\ref{11}) that
\begin{equation}\label{27}
J(ik_j)U^\dagger[f(ik_j,0)-if'(ik_j,0)]P_j=0_n.
\end{equation}
Using (\ref{13}) and (\ref{j4}) with $\kappa=k_j$, and noting $J(ik_j)^{-1}J(ik_j)=I_n$,
we have
\begin{equation}\label{28}
N_{-,k_j}J(ik_j)=0_n,\quad N_{0,k_j}J(ik_j)+N_{-,k_j}\dot{J}(ik_j)=I_n.
\end{equation}
It follows from the first equation in (\ref{28}) that $J(ik_j)^\dagger N_{-,k_j}^\dagger=0_n$, and so
\begin{equation}\label{xxcv}
P_j N_{-,k_j}^\dagger=N_{-,k_j}^\dagger,
\end{equation}
or equivalently,
\begin{equation}\label{29}
N_{-,k_j}P_j=N_{-,k_j}.
\end{equation}
Using (\ref{27}) and the second equation in (\ref{28}), together with (\ref{j5}), (\ref{j6}) and (\ref{25}), we get
\begin{equation*}
\begin{split}
  &U^\dagger[f(ik_j,0)-if'(ik_j,0)]P_j\\
  =&N_{-,k_j}\dot{J}(ik_j)U^\dagger[f(ik_j,0)-if'(ik_j,0)]P_j\\
  =&N_{-,k_j}[\dot{f}'(ik_j,0)^\dagger A-\dot{f}(ik_j,0)^\dagger B]U^\dagger[f(ik_j,0)-if'(ik_j,0)]P_j\\
  =&N_{-,k_j}[\dot{f}'(ik_j,0)^\dagger f(ik_j,0)-\dot{f}(ik_j,0)^\dagger f'(ik_j,0)]P_j\\
  =&-2ik_jN_{-,k_j}A_{k_j}P_j.
\end{split}
\end{equation*}
Going back to (\ref{26}) and using (\ref{29}), we obtain
\begin{equation}\label{30}
  f(ik_j,x)P_j=-2ik_j\varphi(ik_j,x)N_{-,k_j}A_{k_j}P_j=-2ik_j\varphi(ik_j,x)N_{-,k_j}B_j,
\end{equation}
where $B_j= P_jA_{k_j}P_j+I_n-P_j$, and we have used
$$N_{-,k_j}(I_n-P_j)=N_{-,k_j}-N_{-,k_j}P_j=0_n.$$
It is obvious that $B_j$ is positive definite, and satisfies $B_jP_j=P_jB_j=P_jB_jP_j$. Hence $P_jB_j^{-1}=B_j^{-1}P_j=P_jB_j^{-1}P_j$. This implies $P_jB_j^{-\frac{1}{2}}=B_j^{-\frac{1}{2}}P_j=P_jB_j^{-\frac{1}{2}}P_j$.
Denote
\begin{equation}\label{xxcz}
  C_j:=P_j B_j^{-\frac{1}{2}},\quad j=\overline{1,N},
\end{equation}
which are called \emph{normalization matrices}.
Then it follows from (\ref{29})$-$(\ref{xxcz}) that
\begin{equation}\label{31}
2ik_j\varphi(ik_j,x)N_{-,k_j}\!=\!-f(ik_j,x)C_j^2\!=\!\left[-e^{-k_jx}-\int_x^\infty K(x,t)e^{-k_jt}dt\right]C_j^2.
\end{equation}
Together with (\ref{22}) and (\ref{31}), we obtain the following theorem.
\begin{theorem}
Assume that $V\in L_1^1(\mathbb{R}^+)$ and satisfies (\ref{2}), and let $\{-k_j^2\}_{j=1}^N$ be the eigenvalues of the problem $L(V,U)$ with $0<k_l<k_j$  if $l<j$. Then there holds the Marchenko equation
\begin{equation}\label{32}
 F(x+y)+K(x,y)+\int_x^\infty K(x,t)F(t+y)dt=0_n,\quad y>x\ge0,
\end{equation}
where
\begin{equation}\label{33}
F(x):=\sum_{j=1}^N C_j^2 e^{-k_jx}+\frac{1}{2\pi}\int_{-\infty}^{\infty}[S(k)-{U_0}]e^{ikx}dk,
\end{equation}
and the normalization matrices
\begin{equation}\label{34}
C_j:=P_j \left(P_j\int_0^\infty f(ik_j,x)^\dagger f(ik_j,x)dx P_j+I_n-P_j\right)^{-\frac{1}{2}}.
\end{equation}
\end{theorem}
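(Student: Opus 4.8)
The plan is to read off the Marchenko equation by combining the two identities already established, namely (\ref{22}) and (\ref{31}), both valid for $y>x\ge0$. Equation (\ref{22}) expresses the ``free plus scattering'' part
\[
K(x,y)+F_S(x+y)+\int_x^\infty K(x,t)F_S(t+y)\,dt
\]
as the bound-state sum $2i\sum_{j=1}^N k_j\varphi(ik_j,x)N_{-,k_j}e^{-k_jy}$, where $F_S$ is the Fourier-transform term in (\ref{18}); equation (\ref{31}) evaluates each summand in closed form through the normalization matrices $C_j$.

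First I would substitute (\ref{31}) into the right-hand side of (\ref{22}). Since $2ik_j\varphi(ik_j,x)N_{-,k_j}=-f(ik_j,x)C_j^2$ and, by the representation (\ref{6}) at $k=ik_j$, $f(ik_j,x)=e^{-k_jx}+\int_x^\infty K(x,t)e^{-k_jt}\,dt$, multiplying by $e^{-k_jy}$ and summing over $j$ turns that right-hand side into
\[
-\sum_{j=1}^N C_j^2 e^{-k_j(x+y)}-\int_x^\infty K(x,t)\Bigl(\sum_{j=1}^N C_j^2 e^{-k_j(t+y)}\Bigr)dt .
\]
Here the scalar exponentials commute past the constant matrices $C_j^2$, and $K(x,t)$ factors out of the finite sum inside the integral. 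I would then recognize $\sum_{j=1}^N C_j^2 e^{-k_j z}$ as the discrete part of $F$ in (\ref{33}).

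Moving these two terms to the left-hand side and using $F(z)=\sum_{j=1}^N C_j^2 e^{-k_jz}+F_S(z)$, which is exactly the definition (\ref{33}), the discrete and continuous contributions merge both in the inhomogeneous term and inside the integral, producing
\[
F(x+y)+K(x,y)+\int_x^\infty K(x,t)F(t+y)\,dt=0_n,\qquad y>x\ge0,
\]
which is (\ref{32}); the stated form (\ref{34}) of $C_j$ then follows by inserting $A_{k_j}$ from (\ref{j6}) into $C_j=P_jB_j^{-1/2}$ with $B_j=P_jA_{k_j}P_j+I_n-P_j$. Since the analytic heavy lifting, the contour and residue computation behind (\ref{22}) and the projection argument behind (\ref{31}), is already complete, the only points requiring care are bookkeeping: that $C_j^2$ sits to the right of the exponentials in (\ref{31}) but to the left in (\ref{33}), harmless because the exponentials are scalar, and that interchanging the finite sum with the integral is legitimate, which is guaranteed by $N<\infty$ together with $K(x,\cdot)\in L^1\cap L^2$ from (\ref{7}). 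This integrability, rather than any deep new idea, is the only genuine thing left to verify.
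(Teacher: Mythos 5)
Your proposal is correct and follows the paper's own route exactly: the paper derives (\ref{22}) and (\ref{31}) in the body of Section 3 and then obtains the theorem precisely by the substitution you describe, with (\ref{34}) being just the explicit form of $C_j=P_jB_j^{-1/2}$ via (\ref{j6}). Your bookkeeping remarks (scalar exponentials commuting past $C_j^2$, and the harmless finite sum--integral interchange) are accurate and complete the argument.
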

\begin{definition}
The data
$\mathcal{S}:=\{S(k),k_j,C_j\}_{k\in \mathbb{R};j=\overline{1,N}}$ is called scattering data.
\end{definition}
\begin{remark}\label{r3.1}
From the above arguments, we see that
\begin{equation}\label{ali2}
C_j\ge0,\;\;  {\rm rank} C_j=\dim(\ker J(ik_j)),\;\; J(ik_j)^\dagger C_j=0_n,\quad j=\overline{1,N}.
\end{equation}
\end{remark}

\section{Properties of $F(x)$}
This section deals with the properties of the matrix-valued function $F(x)$ defined in (\ref{33}). In particular, we will show $F(\cdot)\in L^1(\mathbb{R}^+)$ and $F'(\cdot)\in L_1^1(\mathbb{R}^+)$. We shall use the Marchenko equation (\ref{32}) to obtain the integrability of $F(x)$.

Letting $x+y=z$ and $y+t=s$ in (\ref{32}), we get for $z>2x\ge0$
\begin{equation}\label{35}
F(z)=-K(x,z-x)-\int_z^\infty K(x,s+x-z)F(s)ds.
\end{equation}
From (\ref{35}) and (\ref{7}) it follows that if  $F\in L^1(\mathbb{R}^+)$ then $F(z)$ vanishes at infinite and is absolutely continuous for $z\in\mathbb{R}^+$ by the absolute continuity of integral, moreover, (\ref{32}) holds also for $y=x$.

With the help of (\ref{7}), and noting that $\sigma(\cdot)$ is a monotone decreasing function, we have
\begin{equation}\label{f1}
\|F(z)\|\le c \sigma\left(\frac{z}{2}\right)+c\int_z^\infty \sigma\left(\frac{s-z}{2}\right)\|F(s)\|ds,
\end{equation}
which implies from Gr\"{o}nwall's inequality that
\begin{equation}\label{xx}
\|F(z)\|\le c  \sigma\left(\frac{z}{2}\right)e^{c\int_z^\infty\sigma\left(\frac{s-z}{2}\right)ds}\le c\sigma\left(\frac{z}{2}\right)\in L^1(\mathbb{R}^+).
\end{equation}

Letting $y=x$ in (\ref{32}), and taking the derivative on both sides with respect to $x$, we have
\begin{align}
\notag 2F'(2x)+\frac{dK(x,x)}{dx}-K(x,x)F(2x)&+\int_x^\infty K_x(x,s)F(s+x)ds\\
 &+\int_x^\infty K(x,s)F'(s+x)ds=0.\label{36}
\end{align}
Integration by parts and using (\ref{8}), we obtain
\begin{equation}\label{37}
F'(2x)=\frac{1}{4}V(x)+K(x,x)F(2x)-\frac{1}{2}\int_x^\infty [K_x(x,s)-K_s(x,s)]F(s+x)ds.
\end{equation}
Note that $V\in L_1^1(\mathbb{R}^+)$. From (\ref{7}) and (\ref{8}) it follows that
\begin{equation}\label{zx.2}
  \|K(x,x)F(2x)\|= \frac{1}{2} \left\|\int_x^\infty V(t)F(2x)dt \right\|\le c \|F(2x)\| \in L^1(\mathbb{R}^+),
\end{equation}
and
\begin{equation}\label{zx.3}
  \!\|xK(x,x)F(2x)\|\!\le c\!\!\int_x^\infty \!\!\!x\|V(t)\|\| F(2x)\|dt\le c\| F(2x)\| \!\!\int_0^\infty\!\!\! t\|V(t)\|dt \in L^1(\mathbb{R}^+).
\end{equation}
This implies that the first two terms in right hand side in (\ref{37}) belong to $L_1^1(\mathbb{R}^+)$. We next prove the last term also belongs to $L_1^1(\mathbb{R}^+)$. By (\ref{9}), and using the fact that $\sup_{x\ge0}x^v\sigma(x)<\infty$ ($v=0,1$), we obtain
\begin{align}
\notag\left\|x^v\!\!\int_x^\infty [K_x(x,s)-K_s(x,s)]F(s+x)ds\right\|&\le cx^v\!\!\int_x^\infty \!\!\sigma (x)\sigma\left(\!\frac{x+s}{2}\!\right)\|F(s+x)\|ds\\
\notag&\le c\int_x^\infty\sigma (y) \|F(2y)\|dy\\
&\le c\sigma(x)\in L^1(\mathbb{R}^+),\;\;v=0,1.\label{zx.1}
\end{align}

On the other hand, from (\ref{16}) and (\ref{17}) it follows that
\begin{equation*}
  S(-k)-{U}_0=[S(k)-{U_0}]^\dagger,
\end{equation*}
which implies $F_S(x)^\dagger=F_S(x)$. Indeed, by the definition of $F_S$,

\begin{align}
 \notag F_S(x)^\dagger&=\frac{1}{2\pi}\left[\int_{-\infty}^\infty (S(k)-{U_0})e^{ikx}dk\right]^\dag\\
 \notag &=\frac{1}{2\pi}\int_{-\infty}^\infty [S(k)-{U_0}]^\dag e^{-ikx}dk\\
 \notag &=\frac{1}{2\pi}\int_{-\infty}^\infty [S(-k)-{U_0}]e^{-ikx}dk\\
 &=\frac{1}{2\pi}\int_{-\infty}^\infty [S(k)-{U_0}]e^{ikx}dk=F_S(x).\label{oma}
\end{align}
Since the normalization matrices $C_j$ ($j=\overline{1,N}$) are Hermite and $k_j>0$, we have $F(x)^\dag=F(x)$.

From the above arguments, we arrive at the following theorem.
\begin{theorem}\label{t4.1}
Assume that $V\in L_1^1(\mathbb{R}^+)$ and satisfies (\ref{2}), then the matrix-valued function $F(x)$ defined in (\ref{33}) is absolutely continuous on $[0,+\infty)$ and goes to zero at infinity, and satisfies $F(x)^\dag=F(x)$ and
\begin{equation}\label{38}
 \int_0^\infty\|(1+x)F'(x)\|dx<\infty.
\end{equation}
\end{theorem}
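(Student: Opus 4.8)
The plan is to establish the three assertions separately, starting with the algebraic one. For the Hermiticity $F(x)^\dagger=F(x)$, I would read off from (\ref{16}) and (\ref{17}) that $S(-k)-U_0=[S(k)-U_0]^\dagger$ on $\mathbb{R}$; feeding this into the Fourier representation (\ref{18}) of $F_S$ gives $F_S(x)^\dagger=F_S(x)$, exactly as carried out in (\ref{oma}). Since each normalization matrix $C_j$ is Hermitian (hence so is $C_j^2$) and each $k_j>0$ is real, the discrete part $\sum_j C_j^2 e^{-k_jx}$ in (\ref{33}) is Hermitian as well, and the two contributions combine to give $F^\dagger=F$.

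The main work is integrability and regularity, and here the key idea is to run the Marchenko equation (\ref{32}) ``backwards.'' Because the scattering data entering (\ref{33}) comes from a genuine potential $V\in L_1^1(\mathbb{R}^+)$, the Jost kernel $K$ exists and obeys the a priori bound (\ref{7}) with $\sigma\in L^1\cap L^2$. Substituting $z=x+y$, $s=y+t$ in (\ref{32}) solves for $F(z)$ in terms of $K$, namely (\ref{35}), valid for all $z>2x\ge0$. Specializing to $x=0$ and applying (\ref{7}) together with the monotonicity of $\sigma$ produces the closed integral inequality (\ref{f1}). This reduction is the crux of the argument: it converts the defining formula for $F$, which only exhibits $F\in L^2(\mathbb{R}^+)$ a priori, into a self-referential estimate amenable to Gr\"onwall.

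I would then apply Gr\"onwall's inequality to (\ref{f1}). The decisive observation is that the kernel weight integrates uniformly in $z$, since $\int_z^\infty\sigma\big((s-z)/2\big)\,ds=2\|\sigma\|_{L^1}<\infty$, so the resulting exponential factor is bounded and (\ref{xx}) yields the pointwise bound $\|F(z)\|\le c\,\sigma(z/2)\in L^1(\mathbb{R}^+)$. The point needing care is justifying the backward Volterra Gr\"onwall step from the a priori class $F\in L^2(\mathbb{R}^+)$, which is guaranteed by $F_S\in L^2$ and the exponential decay of the finitely many discrete terms. Once $F\in L^1$ is known, (\ref{35}) and the absolute continuity of the integral give that $F$ is absolutely continuous on $[0,\infty)$ and vanishes at infinity, and (\ref{32}) extends to $y=x$.

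Finally, for the weighted bound (\ref{38}) on $F'$, I would set $y=x$ in (\ref{32}), differentiate in $x$, and integrate by parts, using (\ref{8}) to identify the boundary contribution with $\tfrac14 V(x)$; this yields the representation (\ref{37}) of $F'(2x)$ as $\tfrac14 V(x)$ plus two correction terms. Each term is then placed in $L_1^1(\mathbb{R}^+)$: the leading term by the hypothesis $V\in L_1^1$; the term $K(x,x)F(2x)$ by (\ref{7}), (\ref{8}) and the bound $\|F(z)\|\le c\,\sigma(z/2)$ just obtained (estimates (\ref{zx.2})--(\ref{zx.3})); and the Wronskian-type integral by (\ref{9}) together with $\sup_{x\ge0}x^v\sigma(x)<\infty$ for $v=0,1$ (estimate (\ref{zx.1})). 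Summing the three contributions gives (\ref{38}). I expect no genuine obstacle in this last step, as it is essentially bookkeeping, so the real difficulty of the theorem is concentrated in the Gr\"onwall reduction of the previous paragraph.
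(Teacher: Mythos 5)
Your proposal is correct and follows essentially the same route as the paper: Hermiticity via the symmetry $S(-k)-U_0=[S(k)-U_0]^\dagger$ as in (\ref{oma}), integrability via the change of variables leading to (\ref{35}), the Gr\"onwall estimate (\ref{f1})--(\ref{xx}), and the weighted bound on $F'$ via the differentiated Marchenko equation (\ref{37}) and the estimates (\ref{zx.2})--(\ref{zx.1}). Your explicit remark about justifying the Gr\"onwall step from the a priori class $F\in L^2(\mathbb{R}^+)$ is a point the paper leaves implicit, but it does not change the argument.
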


In the end of this section, let us summarize what we have know for the direct problem of the problem $L(V,U)$.
From Sections 2$-$4 it follows that if $V\in L_1^1(\mathbb{R}^+)$ and satisfies (\ref{2}) and $U$ in (\ref{4}) is unitary then the corresponding scattering data $\mathcal{S}$ satisfies the following assertions .

(I) The scattering matrix $S(k)$ is continuous and invertible on $\mathbb{R}$, and satisfies (\ref{16}) and (\ref{17}) for some unitary Hermitian matrix $U_0$.

(II) The matrix-valued function $F(x)$, defined by (\ref{33}), is differentiable on $\mathbb{R}^+$, and satisfies (\ref{38}).

(III)  For each $j=\overline{1,N}$, we have $k_j>0$ and the normalization matrix $C_j\ge0$.

\section{Self-adjointness of $V(x)$ and unitarity of $U$}

In the next several sections, we consider the inverse scattering problem which consists in recovering the potential $V(x)$ and matrix $U$ in the boundary condition from the scattering data $\mathcal{S}$. We shall seek the necessary and sufficient conditions for the scattering data to correspond to a self-adjoint potential $V(x)$ and a unitary matrix $U$.
Firstly, in this section, let us study what characteristics the scattering data should have in order to guarantee that the recovered potential $V(x)$ is self-adjoint and the recovered matrix $U$ is unitary.

Denote
\begin{equation}\label{psi}
G(k,x):=f(-k,x)+f(k,x)S(k),\quad k\in \mathbb{R}.
\end{equation}
With the help of (\ref{j3}) and (\ref{j1}), together with (\ref{3}), we have
\begin{equation}\label{vuy}
G(k,0)J(k)=-ik(U+I_n),\quad G'(k,0)J(k)=k(U-I_n),
\end{equation}
which implies
\begin{equation}\label{vuy1}
[G(k,0)+iG'(k,0)]J(k)=-2ikI_n,\quad [G(k,0)-iG'(k,0)]J(k)=-2ikU.
\end{equation}
It follows that
\begin{equation}\label{hyu}
  U=[G(k,0)-iG'(k,0)][G(k,0)+iG'(k,0)]^{-1},\quad k\in\mathbb{R}\setminus\{0\}.
\end{equation}
We note that the expression (\ref{hyu}) can be rewritten as
\begin{equation*}
  U=I_n-2iG'(k,0)[G(k,0)+iG'(k,0)]^{-1}=-I_n+2G(k,0)[G(k,0)+iG'(k,0)]^{-1},
\end{equation*}
which implies from (\ref{4}) that
\begin{equation*}
  A=G(k,0)[G(k,0)+iG'(k,0)]^{-1},\;\;B=G'(k,0)[G(k,0)+iG'(k,0)]^{-1},\;k\in\mathbb{R}\setminus\{0\}.
\end{equation*}
Moreover,
if $\det (U+I_n)\ne0$, i.e., $\det G(k,0)\ne0$ from (\ref{vuy}), then (\ref{hyu}) implies
\begin{equation*}
  U=2[I_n+iG'(k,0)G(k,0)^{-1}]^{-1}-I_n,\quad k\in\mathbb{R}\setminus\{0\}.
\end{equation*}
Similarly, if $\det (U-I_n)\ne0$, i.e., $\det G'(k,0)\ne0$, then (\ref{hyu}) implies
\begin{equation*}
  U=I_n-2[I_n-iG(k,0)G'(k,0)^{-1}]^{-1},\quad k\in\mathbb{R}\setminus\{0\}.
\end{equation*}

Theoretically, from the given scattering data $\mathcal{S}$, the potential $V$ and the unitary matrix $U$ in the boundary condition can be reconstructed by the following algorithm.
\begin{algorithm} Let the scattering data $\mathcal{S}$ be given.\\
\noindent \textbf{Step 1.} Construct the matrix-valued function $F(x)$ defined in (\ref{33}) by the given scattering data $\mathcal{S}$.\\
\noindent \textbf{Step 2.} Construct the Marchenko equation (\ref{32}) and then solve it to obtain $K(x,y)$.\\
\noindent \textbf{Step 3.} Recover the potential $V$ and the Jost solution $f(k,x)$ by (\ref{8}) and (\ref{6}), respectively. Then the unitary matrix $U$ is obtained from (\ref{hyu}).
\end{algorithm}

In this algorithm, one needs the solvability of the Marchenko equation (\ref{32}), which will be considered in the next section. Now, let us give the conditions on the scattering data in order to guarantee the self-adjointness of $V(x)$ and the unitarity of $U$.

\begin{theorem}\label{t5.2}
The conditions (I) and (III) presented in the end of Section 4 can guarantee that the potential $V(x)$ and the matrix $U$ reconstructed  in the above algorithm satisfy $V(x)=V(x)^\dag$ and $U^\dag U=I_n$.
\end{theorem}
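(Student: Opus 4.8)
The plan is to handle the two claims separately, reducing each to a Wronskian identity, and to prove self-adjointness first, since the unitarity of $U$ will rely on it. I begin by recording the one hypothesis-driven fact both parts share: the conditions force $F(x)^\dagger=F(x)$. Condition (I) gives $S(-k)=S(k)^\dagger$, so the Fourier-transform term in (\ref{33}) is Hermitian exactly as in the computation (\ref{oma}); condition (III) gives $C_j\ge0$, so each summand $C_j^2e^{-k_jx}$ is Hermitian. Hence $F=F^\dagger$. Throughout I grant that the Marchenko equation (\ref{32}) is uniquely solvable (Section 6) and that the reconstructed $K(x,y)$ yields, through (\ref{6}) and (\ref{8}), a Jost solution $f(k,x)$ of (\ref{1}) with $V=-2\tfrac{d}{dx}K(x,x)$.

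For the self-adjointness of $V$, I would take the Hermitian adjoint of (\ref{32}) and use $F=F^\dagger$ to see that $K(x,y)^\dagger$ solves the companion Marchenko equation in which the kernel multiplies from the opposite side. By unique solvability this identifies $f(-\bar k,x)^\dagger$ as the Jost solution of the transposed equation (\ref{5}) carrying the potential $V^\dagger$. The key step is then to establish, directly from the integral representations, the vanishing-Wronskian identity $[f(-\bar k,x)^\dagger;f(k,x)]\equiv0$ for $k\in\overline{\mathbb{C}}^+$, the inverse-problem analogue of (\ref{11}). Granting this, I differentiate in $x$: since $f$ solves (\ref{1}) with $V$ while $f(-\bar k,x)^\dagger$ solves (\ref{5}) with $V^\dagger$, the cross terms cancel and
\begin{equation*}
0=\frac{d}{dx}[f(-\bar k,x)^\dagger;f(k,x)]=f(-\bar k,x)^\dagger\bigl(V(x)-V(x)^\dagger\bigr)f(k,x),
\end{equation*}
whence the invertibility of the Jost solution for large $k\in\mathbb{C}^+$ forces $V=V^\dagger$.

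For unitarity I would exploit the self-adjointness just obtained. By (\ref{15}) and (\ref{psi}) the matrix $G(k,x)=f(-k,x)+f(k,x)S(k)$ solves (\ref{1}) with $V$, so $G(k,x)^\dagger$ solves (\ref{5}) with $V=V^\dagger$ and therefore $[G(k,x)^\dagger;G(k,x)]$ is independent of $x$. Evaluating it as $x\to+\infty$ with the asymptotics (\ref{yyy}), the oscillatory terms collapse and the constant value is seen to equal $2ik\,[S(k)^\dagger S(k)-I_n]$, which vanishes by the unitarity $S(k)^\dagger=S(k)^{-1}$ in (\ref{16}). Thus $[G(k,0)^\dagger;G(k,0)]=0$, i.e. $G(k,0)^\dagger G'(k,0)=G'(k,0)^\dagger G(k,0)$; and expanding $U^\dagger U$ from the representation (\ref{hyu}) shows that $U^\dagger U=I_n$ is equivalent to precisely this identity, giving the claim.

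The main obstacle is the vanishing-Wronskian identity $[f(-\bar k,x)^\dagger;f(k,x)]\equiv0$ in the reconstruction. Taking adjoints of (\ref{32}) only produces the companion equation, and $K(x,x)$ need not be Hermitian entrywise, so this identity cannot be read off from a pointwise symmetry of the kernel; it must instead be extracted from the interplay of the two Marchenko equations, their unique solvability, and the Goursat relations underlying (\ref{6})--(\ref{8}). Everything else—the reduction of $U^\dagger U=I_n$ to a Wronskian, the $x$-independence argument, and the asymptotic evaluation—is routine once self-adjointness is in hand.
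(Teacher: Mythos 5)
Your verification that (I) and (III) force $F(x)=F(x)^\dagger$ matches the paper, but the self-adjointness half of your proposal has a genuine gap: the whole argument rests on the vanishing-Wronskian identity $[f(-\bar k,x)^\dagger;f(k,x)]\equiv 0$ for the \emph{reconstructed} Jost solution, and you do not prove it --- you explicitly flag it as ``the main obstacle'' and only gesture at where a proof might come from. Worse, the natural route to that identity (constancy of the Wronskian in $x$ plus decay as $x\to\infty$) is circular here, because the constancy of $[f(-\bar k,x)^\dagger;f(k,x)]$ is exactly what fails when $V\neq V^\dagger$: its $x$-derivative equals $f(-\bar k,x)^\dagger(V-V^\dagger)f(k,x)$, which is the quantity you are trying to kill. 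The paper sidesteps this entirely by invoking the known result of Agranovich--Marchenko \cite[p.122]{AM} that the potential reconstructed from the Marchenko equation is self-adjoint whenever $F(x)=F(x)^\dagger$, and then checking $F=F^\dagger$ from (I) and (III) exactly as you do. As written, your first half is an unproven alternative route, not a proof.

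Your unitarity argument is essentially the paper's. Both establish $G(k,0)^\dagger G'(k,0)=G'(k,0)^\dagger G(k,0)$ --- the paper directly from the Wronskian relations (\ref{10})--(\ref{11}) and $S(k)^\dagger=S(k)^{-1}$ at finite $x$, you by evaluating the $x$-independent Wronskian at infinity via (\ref{als}); both computations give $[G(k,x)^\dagger;G(k,x)]=2ik(S(k)^\dagger S(k)-I_n)=0_n$ and are fine. Both then expand $U^\dagger U$ from (\ref{hyu}). However, you silently assume that $G(k,0)+iG'(k,0)$ is invertible, which is needed for (\ref{hyu}) to define $U$ at all and for the cancellation in $U^\dagger U$. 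The paper proves this: it shows $E:=G(k,0)^\dagger G(k,0)+G'(k,0)^\dagger G'(k,0)$ is positive definite (a vector in its kernel would be annihilated by $G(k,\cdot)$ for all $x$, contradicting (\ref{als})), assembles a $2n\times 2n$ unitary block matrix from $G(k,0)E^{-1/2}$ and $G'(k,0)E^{-1/2}$, and deduces $[G(k,0)+iG'(k,0)]E^{-1}[G(k,0)^\dagger-iG'(k,0)^\dagger]=I_n$, which delivers the invertibility and $U^\dagger U=I_n$ in one stroke. You need to supply that step.
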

\begin{proof}
It is known \cite[p.122]{AM} that the reconstructed potential $V(x)$ is self-adjoint if the matrix-valued function $F(x)$ is self-adjoint, which is determined by the scattering data $\mathcal{S}$. By a calculation similar to (\ref{oma}), we obtain $F(x)=F(x)^\dag$ if $S(k)^\dag=S(-k)$, $U_0^\dag=U_0$, $k_j>0$ and $C_j^\dag=C_j$. This proves the self-adjointness of $V(x)$.

Now let us prove that the matrix $U$ in (\ref{hyu}) is a unitary matrix. Since $V(x)=V(x)^\dag$, then
the Wronskian relations (\ref{10}) and (\ref{11}) hold. From these two equations and (\ref{psi}), together with $S(k)^\dag=S(k)^{-1}$, we have
\begin{equation}\label{sld}
  G(k,0)^\dag G'(k,0)-G'(k,0)^\dag G(k,0)=[G(k,x)^\dag;G(k,x)]=0_n.
\end{equation}
Using (\ref{yyy}) and (\ref{psi}), for each fixed $k\in \mathbb{R}\setminus\{0\}$, there holds,
\begin{equation}\label{als}
G(k,x)=e^{-ikx}I_n+e^{ikx}S(k)+o(1),\quad x\to\infty.
\end{equation}
It follows that the matrix
\begin{equation}\label{jhb}
  E:=G(k,0)^\dag G(k,0)+G'(k,0)^\dag G'(k,0)
\end{equation}
is positive definite. Indeed, it is obvious $E^\dag=E$. On the other hand, we shall show $\det E\ne0$. Conversely, if $\det E=0$ then there exists $0\ne\alpha\in \ker E$ such that
\begin{equation*}
\begin{split}
 0=(E\alpha,\alpha)&=(G(k,0)^\dag G(k,0)\alpha,\alpha)+(G'(k,0)^\dag G'(k,0)\alpha,\alpha)\\
 &=( G(k,0)\alpha,G(k,0)\alpha)+( G'(k,0)\alpha,G'(k,0)\alpha).\\
\end{split}
\end{equation*}
It follows that $G(k,0)\alpha=0=G'(k,0)\alpha$, which implies $G(k,x)\alpha=0$ for all $x\ge0$. This is a contradiction with (\ref{als}).

From (\ref{jhb}) it follows that
\begin{equation}\label{nma}
 E^{-\frac{1}{2}}[G(k,0)^\dag G(k,0)+G'(k,0)^\dag G'(k,0)] E^{-\frac{1}{2}}=I_n.
\end{equation}
Denote
\begin{equation}
C:=\begin{bmatrix}
G'(k,0)E^{-\frac{1}{2}} & G(k,0)E^{-\frac{1}{2}} \\ G(k,0)E^{-\frac{1}{2}} & -G'(k,0)E^{-\frac{1}{2}}
\end{bmatrix}.
\end{equation}
Then using (\ref{sld}) and (\ref{nma}), by a direct calculation, we get $C^\dag C=I_{2n}$. This yields $CC^\dag =I_{2n}$, which implies
\begin{equation*}
 \left\{\begin{split}
   &G'(k,0)E^{-1}G'(k,0)^\dag+ G(k,0)E^{-1}G(k,0)^\dag =I_n, \\
     & G'(k,0)E^{-1}G(k,0)^\dag-G(k,0)E^{-1}G'(k,0)^\dag =0_n.
 \end{split}\right.
\end{equation*}
Therefore,
\begin{align}
\notag  [G(k,0)&+iG'(k,0)]E^{-1}[G(k,0)^\dag-iG'(k,0)^\dag]\\
\notag   =&G(k,0)E^{-1}G(k,0)^\dag+G'(k,0)E^{-1}G'(k,0)^\dag\\
  &-i[G(k,0)E^{-1}G'(k,0)^\dag-G'(k,0)E^{-1}G(k,0)^\dag]
  =I_n\label{loq}
\end{align}

Consider the adjoint matrix of $U$ defined by (\ref{hyu}),
\begin{equation*}
  U^\dag=[G(k,0)^\dag-iG'(k,0)^\dag]^{-1}[G(k,0)^\dag+iG'(k,0)^\dag],\quad k\in\mathbb{R}\setminus\{0\}.
\end{equation*}
It follows from (\ref{loq}) that
\begin{equation*}
  \begin{split}
  U^\dag U=&[G(k,0)^\dag-iG'(k,0)^\dag]^{-1}[G(k,0)^\dag+iG'(k,0)^\dag]\\
  &\times[G(k,0)-iG'(k,0)][G(k,0)+iG'(k,0)]^{-1}\\
  =&[G(k,0)^\dag-iG'(k,0)^\dag]^{-1}E[G(k,0)+iG'(k,0)]^{-1}=I_n.
  \end{split}
\end{equation*}
Consequently, the matrix of $U$ defined by (\ref{hyu}) is unitary.
\end{proof}

\begin{remark}
Let $k_j>0$ and $C_j^\dag=C_j$ for $j=\overline{1,N}$. From the the proof of Theorem \ref{t5.2}, we see that, the only conditions that $S(k)^\dag=S(-k)$ and $U_0^\dag=U_0$ can guarantee the self-adjointness of the potential $V(x)$, and the only condition $S(k)^\dag=S(k)^{-1}$ can guarantee the unitarity of the matrix $U$.
\end{remark}

\section{Solvability of the Marchenko equation}
In this section, we study the solvability of the Marchenko equation (\ref{32}). We will give the conditions on the data $\{S(k),k_j,C_j\}_{k\in \mathbb{R};j=\overline{1,N}}$ such that the Marchenko equation (\ref{32}) is uniquely solvable.

It is known \cite{AM,BL} that if the matrix-valued function $F(x)$ satisfies the condition (II) presented in Section 4, then $F(\cdot)\in L^1(\mathbb{R}^+)$ and the operator $\mathbf{F}_x$, defined by
\begin{equation*}
  (\mathbf{{F}}_xh)(t):=\int_x^\infty h(s)F(s+t)ds,
\end{equation*}
is compact in $L^1((x,+\infty),\mathbb{C}^n)$ for each $x\ge0$. Thus, by the Fredholm alternative theorem, we shall seek the conditions that guarantee the following homogeneous equation (\ref{41}) has only zero solution in $L^1((x,+\infty),\mathbb{C}^n)$ for each fixed $x\ge0$.
\begin{lemma}\label{l5.1}
Assume that the matrix-valued function $F(t)$ satisfies the condition (II) presented in Section 4.
For each fixed $x\ge0$, if $h\in L^1((x,\infty);\mathbb{C}^n)$ solves
\begin{equation}\label{41}
  h(t)+\int_x^\infty h(s)F(s+t)ds=0,
\end{equation}
then $h(t)$ satisfies the following properties: (a) $h\in L^2(x,+\infty)$; (b) $h(t)$ is continuous on $[x,+\infty)$; (c) $h(t)\to0$ as $t\to+\infty$; (d) the derivative $h'(t)$ exists and belongs to $L^1(x,\infty)$.
\end{lemma}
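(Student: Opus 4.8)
The plan is to rewrite (\ref{41}) as the fixed-point identity
\begin{equation*}
h(t)=-\int_x^\infty h(s)F(s+t)\,ds,\qquad t\ge x,
\end{equation*}
and to bootstrap the regularity of $h$ from the integrability and decay of $F$. First I would record the properties of $F$ that I use: by condition (II) the derivative $F'$ lies in $L^1(\mathbb{R}^+)$ (indeed in $L_1^1$ by (\ref{38})), and, as noted just before the lemma, $F\in L^1(\mathbb{R}^+)$. Since $F'\in L^1$, the function $F$ is uniformly continuous and has a limit at $+\infty$; since also $F\in L^1$, that limit must be $0$. Hence $F$ is bounded on $[0,\infty)$, say $\|F(z)\|\le M$, uniformly continuous, and $F(z)\to0$ as $z\to\infty$. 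Throughout I use the submultiplicativity $\|h(s)F(s+t)\|\le\|h(s)\|\,\|F(s+t)\|$ afforded by the chosen norms.

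For (a) I would exploit boundedness of $F$ directly: from the fixed-point identity,
\begin{equation*}
\|h(t)\|\le M\int_x^\infty\|h(s)\|\,ds=M\,\|h\|_{L^1(x,\infty)}<\infty,
\end{equation*}
so $h\in L^\infty(x,\infty)$; then $\int_x^\infty\|h(t)\|^2\,dt\le\|h\|_{L^\infty}\,\|h\|_{L^1}<\infty$ gives $h\in L^2(x,\infty)$. For (c), fix $s$; as $t\to\infty$ one has $F(s+t)\to0$, and the integrand is dominated by $M\|h(s)\|\in L^1$, so dominated convergence yields $h(t)\to0$. For (b), the difference
\begin{equation*}
h(t+\delta)-h(t)=-\int_x^\infty h(s)\big[F(s+t+\delta)-F(s+t)\big]\,ds
\end{equation*}
tends to $0$ as $\delta\to0$ by the uniform continuity of $F$ together with the same $L^1$-dominating function, which gives continuity of $h$ on $[x,\infty)$.

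The assertion requiring a little care is (d). I would introduce the candidate derivative
\begin{equation*}
g(t):=-\int_x^\infty h(s)F'(s+t)\,ds
\end{equation*}
and first check $g\in L^1(x,\infty)$ by Tonelli's theorem: the decisive point is the translation-invariant bound
\begin{equation*}
\int_x^\infty\|F'(s+t)\|\,dt=\int_{s+x}^\infty\|F'(u)\|\,du\le\|F'\|_{L^1},
\end{equation*}
so that $\int_x^\infty\|g(t)\|\,dt\le\|F'\|_{L^1}\,\|h\|_{L^1}<\infty$. With integrability in hand, Fubini's theorem and the absolute continuity of $F$ give, for $t\ge x$,
\begin{equation*}
\int_x^t g(\tau)\,d\tau=-\int_x^\infty h(s)\big[F(s+t)-F(s+x)\big]\,ds=h(t)-h(x),
\end{equation*}
where the last equality uses the fixed-point identity at the levels $t$ and $x$. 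Thus $h$ is absolutely continuous on $[x,\infty)$ with $h'=g\in L^1(x,\infty)$, which is (d).

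The main obstacle is precisely this last step: justifying the interchange of integration that both produces the $L^1$-bound on $g$ and identifies $g$ with $h'$. The key quantitative input is the translation bound $\int_x^\infty\|F'(s+t)\|\,dt\le\|F'\|_{L^1}$, which decouples the double integral and is exactly where the hypothesis $F'\in L^1$ from (\ref{38}) enters; the remaining arguments form a standard dominated-convergence bootstrap resting on the boundedness, uniform continuity and decay of $F$ established in the first paragraph.
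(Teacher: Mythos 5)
Your proposal is correct and follows essentially the same bootstrap as the paper's proof: boundedness, uniform continuity and decay of $F$ (deduced from $F'\in L^1$ via (\ref{38})) give (a)--(c), and the translation-invariant bound $\int_x^\infty\|F'(s+t)\|\,dt\le\|F'\|_{L^1}$ combined with Tonelli gives the $L^1$ estimate in (d). The only cosmetic differences are that you obtain (a) via $h\in L^1\cap L^\infty\subset L^2$ rather than the paper's quadratic-form estimate, and you identify $h'(t)=-\int_x^\infty h(s)F'(s+t)\,ds$ by integrating the candidate derivative (Fubini plus the fundamental theorem of calculus) instead of the paper's direct difference-quotient computation; your version of (d) is, if anything, the more carefully justified of the two.
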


\begin{proof}
(a) Since $F(t)$ satisfies (\ref{38}), we have
\begin{equation}\label{gun}
 \| F(t)\|\le c\int_t^\infty \|F'(s)\|ds\to0, \;\;t\to+\infty.
\end{equation}
Note that $F(t)$ is continuous on $[0,+\infty)$.
It follows from (\ref{gun}) that  $\sup_{t\ge 0}F(t)<\infty$.
Postmultiplying both sides of (\ref{41}) by $h(t)^\dag$, and integrating on $[x,\infty)$, we have
\begin{equation*}
\begin{split}
 \int_x^\infty h(t)h(t)^\dag dt
& \le c \int_x^\infty\int_x^\infty \|h(s)\|\|F(s+t)\|\|h(t)\|dsdt\\
&\le c\sup_{t\ge x,s\ge x}\|F(s+t)\|\left[\int_x^\infty \|h(s)\|ds\right]^2<\infty,
\end{split}
\end{equation*}
which implies that if $h\in L^1(x,+\infty)$ solves (\ref{41}) then $h\in L^2(x,+\infty)$.

(b) Since $F(t)$ is uniformly continuous on $[x,\infty)$, we have
\begin{equation*}
 \begin{split}
 \| h(t+\Delta t)-h(t)\|&\le c\int_x^\infty \|h(s)\|\|F(t+\Delta t+s)-F(t+s)\|ds\\
 &\le c\sup_{s\ge x}\|F(t+s+\Delta t)-F(t+s)\|\to0,\;\Delta t\to0,
 \end{split}
\end{equation*}
which implies that $h(t)$ is continuous on $[x,+\infty)$.

(c) By virtue of (\ref{gun}), we have
\begin{equation*}
   \| h(t)\|\le c \sup_{s\ge x}\|F(t+s)\|\int_x^\infty \|h(s)\|ds\to0,\;t\to+\infty.
\end{equation*}

(d) Since $F'(t)$ exists and belongs to $L^1(x,+\infty)$, there exists $\theta\in[0,1]$ such that
\begin{align}
\notag \frac{h(t+\Delta t)-h(t)}{\Delta t}=&-\int_x^\infty h(s)\frac{F(t+\Delta t+s)-F(t+s)}{\Delta t}ds.\\
\notag =&-\int_x^\infty h(s)F'(t+\theta\Delta t+s)ds\\
\notag =&-\int_{x+t+\theta\Delta t}^\infty h(s-t)F'(s)ds\\
  &-\int_{x+t+\theta\Delta t}^\infty [h(s-t-\theta\Delta t)-h(s-t)]F'(s)ds\label{jkx}
\end{align}
By (b) and (c), we see that $h(t)$ is uniformly continuous on $[x,+\infty)$. This implies the last term in (\ref{jkx}) vanishes as $\Delta t\to0$. It follows that
\begin{equation*}
\begin{split}
  h'(t)=-\int_{x+t}^\infty h(s-t)F'(s)ds=-\int_{x}^\infty h(s)F'(s+t)ds.
\end{split}
\end{equation*}
Furthermore, we have
\begin{equation*}
\begin{split}
 \int_x^\infty \|h'(t)\|&\le c\int_x^\infty\int_x^\infty\|h(s)\|\|F'(s+t)\|dsdt\\
 &\le c\int_x^\infty\int_{x+s}^\infty\|h(s)\|\|F'(t)\|dtds<\infty.
\end{split}
\end{equation*}
This completes the proof.
\end{proof}

\begin{lemma}\label{l5.2}
Assume that the data $\{S(k),k_j,C_j\}_{k\in\mathbb{R};j=\overline{1,N}}$ satisfies the conditions (I)$-$(III) presented in Section 4. For each fixed $x\ge0$, if the row vector-valued function $h(t)\in L^1((x,+\infty),\mathbb{C}^n)$ solves (\ref{41}), then
\begin{equation}\label{46}
\widehat{h}(-ik_j)C_j=0,\quad j=\overline{1,N},
\end{equation}
and
\begin{equation}\label{bui}
\widehat{ h}(-k)S(k)=-\widehat{h}(k),\quad k\in \mathbb{R},
\end{equation}
where
\begin{equation}\label{42}
\widehat{ h}(k):=\int_x^\infty h(t)e^{-ikt}dt.
\end{equation}
\end{lemma}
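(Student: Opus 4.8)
The plan is to pass to the Fourier transform $\widehat h$ of \eqref{42}, pair the homogeneous Marchenko equation \eqref{41} with $h(t)^\dag$, integrate in $t$, and reduce everything to a single scalar identity among manifestly nonnegative quantities by means of Plancherel's theorem together with the unitarity of $S(k)$ from condition (I). First I would record the analytic behaviour of $\widehat h$. By Lemma \ref{l5.1}, after extending $h$ by zero onto $(-\infty,x)$ the function $h$ lies in $L^1\cap L^2$, is continuous, vanishes at infinity, and has $h'\in L^1$. Hence $\widehat h(k)=\int_x^\infty h(t)e^{-ikt}\,dt$ is analytic in $\mathbb{C}^-$, continuous on $\overline{\mathbb C}^-$, and (integrating by parts and using $h'\in L^1$) satisfies $\widehat h(k)=O(1/|k|)$ as $|k|\to\infty$ in $\overline{\mathbb C}^-$. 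In particular $\widehat h(-ik_j)$ is well defined because $-ik_j\in\mathbb C^-$ for $k_j>0$, and both $\widehat h(-k)$ and $\widehat h(k)^\dag$, regarded as functions of $k$, extend analytically to $\mathbb C^+$ with $O(1/|k|)$ decay there.

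Next I multiply \eqref{41} on the right by $h(t)^\dag$ and integrate over $(x,\infty)$. Writing $F=F_b+F_S$ with $F_b(z)=\sum_{j=1}^N C_j^2 e^{-k_jz}$ and $F_S$ the scattering part of \eqref{33}, the three resulting pieces are evaluated as follows. The diagonal term $\int_x^\infty h(t)h(t)^\dag\,dt$ equals $\frac{1}{2\pi}\int_{\mathbb R}\widehat h(k)\widehat h(k)^\dag\,dk$ by Plancherel; the bound-state term factorizes into $\sum_j \widehat h(-ik_j)C_j^2\widehat h(-ik_j)^\dag$; and for the scattering term I would observe that $\psi(t):=\int_x^\infty h(s)F_S(s+t)\,ds$ is a cross-correlation of $h$ and $F_S$, whose Fourier transform is $\widehat h(-k)(S(k)-U_0)$ in the $L^2$ sense, so that Plancherel yields $\frac{1}{2\pi}\int_{\mathbb R}\widehat h(-k)(S(k)-U_0)\widehat h(k)^\dag\,dk$ (this uses only $F_S\in L^2$, sidestepping the non-integrability of $S-U_0$). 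I then restore $S$ in place of $S-U_0$ at the cost of the term $\frac{1}{2\pi}\int_{\mathbb R}\widehat h(-k)U_0\widehat h(k)^\dag\,dk$, which vanishes: its integrand extends to a function analytic in $\mathbb C^+$ and $O(1/|k|^2)$ at infinity, so closing the contour in the upper half plane gives zero. The outcome is the identity
\begin{equation*}
\frac{1}{2\pi}\int_{\mathbb R}\bigl[\widehat h(k)+\widehat h(-k)S(k)\bigr]\widehat h(k)^\dag\,dk+\sum_{j=1}^N\widehat h(-ik_j)C_j^2\widehat h(-ik_j)^\dag=0.
\end{equation*}

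Finally I would exploit unitarity. Set $\Omega(k):=\widehat h(k)+\widehat h(-k)S(k)$, which is continuous and in $L^2(\mathbb R)$. Using $S(k)S(k)^\dag=I_n$ from \eqref{16} and the substitution $k\mapsto -k$, a direct expansion gives $\frac{1}{2\pi}\int_{\mathbb R}\Omega(k)\Omega(k)^\dag\,dk=2a+2\,{\rm Re}\,I$, where $a:=\frac{1}{2\pi}\int_{\mathbb R}\widehat h(k)\widehat h(k)^\dag\,dk\ge0$ and $I:=\frac{1}{2\pi}\int_{\mathbb R}\widehat h(-k)S(k)\widehat h(k)^\dag\,dk$. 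The displayed identity says $a+I=-\sum_j\widehat h(-ik_j)C_j^2\widehat h(-ik_j)^\dag$; since each $C_j\ge0$ and $C_j=C_j^\dag$, the right side equals $-\sum_j\|C_j\widehat h(-ik_j)^\dag\|^2$, which is real and $\le0$. As $a$ is real this forces $I$ to be real, whence $\frac{1}{2\pi}\int_{\mathbb R}\|\Omega(k)\|^2\,dk=2(a+I)\le0$, while the left side is $\ge0$. Therefore both sides vanish: continuity of $\Omega$ gives $\Omega\equiv0$ on $\mathbb R$, which is exactly \eqref{bui}, and $\|C_j\widehat h(-ik_j)^\dag\|^2=0$ for every $j$, which is \eqref{46}.

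I expect the main obstacle to be the bookkeeping that makes the final cancellation succeed: the scattering term must be reassembled into $\widehat h(-k)S(k)$, which requires the analytic vanishing of the $U_0$-contribution, and one must verify that $a+I$ is \emph{real} so that the unitarity computation of $\int\Omega\Omega^\dag$ collapses to $2(a+I)$ and can be matched against the nonnegative bound-state sum. The second delicate point is justifying the $L^2$ Parseval step for the scattering part, since $S-U_0$ is only square-integrable rather than integrable.
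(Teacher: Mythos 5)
Your proposal is correct and follows essentially the same route as the paper: pair the homogeneous equation with $h(t)^\dagger$, pass to Fourier space via Parseval, eliminate the $U_0$ term, and combine the unitarity of $S(k)$ with the nonnegativity of the bound-state sum to force both pieces of the resulting identity to vanish. The only differences are cosmetic: where the paper first uses Cauchy--Schwarz to show ${\rm Re}\int[\widehat h(k)+\widehat h(-k)S(k)]\widehat h(k)^\dagger dk\ge0$ and then runs a second unitarity expansion to get $z\equiv0$, you extract both conclusions at once from the single identity $\frac{1}{2\pi}\int\Omega\Omega^\dagger\,dk=2(a+{\rm Re}\,I)$, and your contour-integral disposal of the $U_0$ term replaces the paper's simpler observation that $\int h(-t)U_0h(t)^\dagger dt=0$ by support considerations.
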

\begin{proof}
 Let $h\in L^1((x,+\infty),\mathbb{C}^n)$ with $h(t)=0$ for $t<x$ solve the equation (\ref{41}). Then from Lemma \ref{l5.1} it follows that $h\in L^2(\mathbb{R})$ with $h(t)=0$ for $t<x$. Denote
\begin{equation*}
 g(t):=\int_{x}^\infty h(s)F_S(s+t)ds=\int_{-\infty}^\infty h(-s)F_S(t-s)ds.
\end{equation*}
Since $F_S\in L^2(\mathbb{R})$ and (\ref{18}), we have
\begin{equation}\label{40}
\widehat{ g}(k)=\widehat{ h}(-k)[S(k)-U_0].
\end{equation}
Postmultiplying both sides of (\ref{41}) by $h(t)^\dag$, integrating on $\mathbb{R}$, and using (\ref{33}), we get
 \begin{equation}\label{43}
   \int_{-\infty}^\infty h(t)h(t)^\dag dt+\int_{-\infty}^\infty g(t)h(t)^\dag dt+\sum_{j=1}^N  \widehat{h}(-ik_j)C_j^2\widehat{h}(-ik_j)^\dag=0.
 \end{equation}
 Note that
 \begin{equation*}
  \int_{-\infty}^\infty \!h(t)h(t)^\dag dt\!=\!\frac{1}{2\pi} \!\!\int_{-\infty}^\infty \widehat{h}(k)\widehat{h}(k)^\dag dk,\;\int_{-\infty}^\infty \!g(t)h(t)^\dag dt\!=\!\frac{1}{2\pi}\!\! \int_{-\infty}^\infty \widehat{g}(k)\widehat{h}(k)^\dag dk,
 \end{equation*}
and $C_j^\dag=C_j$. It follows from (\ref{40}) that the equation (\ref{43}) is equivalent to
\begin{align}
\notag\int_{-\infty}^\infty \!\widehat{h}(k)\widehat{h}(k)^\dag dk\!+\!\int_{-\infty}^\infty \!\widehat{ h}(-k)&[S(k)\!-U_0]\widehat{h}(k)^\dag dk\\
&+2\pi\sum_{j=1}^N [\widehat{h}(-ik_j)C_j][\widehat{h}(-ik_j)C_j]^\dag\!=0.\label{44}
\end{align}
Since $U_0$ is a Hermitian matrix and $h(t)=0$ for $t<x$, we have
\begin{equation*}
\int_{-\infty}^\infty \!\widehat{ h}(-k)U_0\widehat{h}(k)^\dag dk=2\pi\int_{-\infty}^\infty \!{ h}(-t)U_0{h}(t)^\dag dt=0.
\end{equation*}
Therefore, we can rewrite (\ref{44}) as
\begin{equation}\label{45}
 \int_{-\infty}^\infty \![\widehat{h}(k)+\widehat{ h}(-k)S(k)]\widehat{h}(k)^\dag dk+2\pi\sum_{j=1}^N [\widehat{h}(-ik_j)C_j][\widehat{h}(-ik_j)C_j]^\dag\!=0.
\end{equation}
Note that the second term in the above equation is non-negative.
By the Schwarz inequality and the unitarity of $S(k)$, we get
\begin{equation*}
\begin{split}
\left|\left(\widehat{h}(-k)S(k),\widehat{h}(k)\right)\right|&\le \left(\widehat{h}(-k),\widehat{h}(-k)\right)^{\frac{1}{2}}\left(\widehat{h}(k),\widehat{h}(k)\right)^{\frac{1}{2}}\\
&\le\frac{1}{2}\left[\left(\widehat{h}(-k),\widehat{h}(-k)\right)+\left(\widehat{h}(k),\widehat{h}(k)\right)\right],
\end{split}
\end{equation*}
which implies
\begin{equation*}
  \left| \int_{-\infty}^\infty \widehat{ h}(-k)S(k)\widehat{h}(k)^\dag dk\right|\le \int_{-\infty}^\infty \left|\widehat{ h}(-k)S(k)\widehat{h}(k)^\dag\right| dk\le \int_{-\infty}^\infty \widehat{h}(k)\widehat{h}(k)^\dag dk.
\end{equation*}
It follows that the real part of the first term in (\ref{45}) is also non-negative. Thus, we conclude that
(\ref{46}) holds,
and hence
\begin{equation}\label{47}
\int_{-\infty}^\infty \![\widehat{h}(k)+\widehat{ h}(-k)S(k)]\widehat{h}(k)^\dag dk=0.
\end{equation}
Denote $z(k):=\widehat{h}(k)+\widehat{ h}(-k)S(k)$. Then using (\ref{47}), we have
\begin{equation*}
\begin{split}
\int_{-\infty}^\infty \!\widehat{h}(k)\widehat{h}(k)^\dag dk&=\int_{-\infty}^\infty \!\widehat{h}(-k)\widehat{h}(-k)^\dag dk\\
&=\int_{-\infty}^\infty \![\widehat{h}(-k)S(k)][\widehat{h}(-k)S(k)]^\dag dk\\
&=\int_{-\infty}^\infty \![z(k)-\widehat{h}(k)][z(k)-\widehat{h}(k)]^\dag dk\\
&=\int_{-\infty}^\infty \!z(k)z(k)^\dag dk+\int_{-\infty}^\infty \!\widehat{h}(k)\widehat{h}(k)^\dag dk,
\end{split}
\end{equation*}
which is impossible unless $\int_{-\infty}^\infty \!z(k)z(k)^\dag dk=0$, which implies (\ref{bui}).
\end{proof}

Since $U_0$ is a unitary Hermitian matrix, without loss of generality, in the following discussion, we assume
\begin{equation}\label{xmk}
U_0=M{\rm diag}\{I_{n-n_D},-I_{n_D} \}M^\dag
\end{equation}
for some unitary matrix $M$. Let us show that this assumption is reasonable. Since $U_0=U_0^\dag=U_0^{-1}$, the eigenvalues of which are either $1$ or $-1$. Thus there exists a unitary matrix $M'$ such that $M'^\dag U_0 M'={\rm diag}\{a_1,...,a_i,...,a_j,...,a_n \}:=D$, where $a_j=\pm1$, $j=\overline{1,n}$. Note that premultiplying the matrix $D$ by a permutation matrix $P(i,j)$, one can exchange the $i$-th row and $j$-th row of $D$, and postmultiplying  $D$ by  $P(i,j)$, one can exchange the $i$-th column and $j$-th column of $D$, and hence $P(i,j)DP(i,j)={\rm diag}\{a_1,...,a_j,...,a_i,...,a_n \}$. On the other hand, note that every permutation matrix is also unitary Hermitian matrix, and the product of two unitary matrices is also unitary. Thus the assumption (\ref{xmk}) is reasonable.

\begin{theorem}\label{t5.1}
Assume that the data $\{S(k),k_j,C_j\}_{k\in\mathbb{R};j=\overline{1,N}}$ satisfies the conditions (I)$-$(III) presented in the end of Section 4, and the condition (IV): the matrix $S(k)=-J_1(-k)J_1(k)^{-1}$ for some matrix-valued function $J_1(k)$, which is invertible for $k\in\mathbb{R}\setminus\{0\}$ and satisfies:\\
(i) $J_1(k)$ is analytic in $\mathbb{C}^+$, continuous in $\overline{\mathbb{C}}^+$ and satisfies
\begin{equation}\label{xmk1}
  J_1(k)=-ik A_1+B_1+QA_1+o(1),\quad |k|\to\infty,\;\;k\in \overline{\mathbb{C}}^{+},
\end{equation}
where $Q$ is a constant matrix and
\begin{equation}\label{xmk2}
A_1=M{\rm diag}\{I_{n-n_D},0_{n_D}\}M^\dag,\;B_1=M{\rm diag}\{0_{n-n_D},-iI_{n_D}\}M^\dag;
\end{equation}\\
 (ii) $C_j J_1(ik_j)=0$ and $\dim(\ker J_1(ik_j))={\rm rank}C_j$ ($j=\overline{1,N}$);\\
  (iii) $J_1(k)^{-1}$ has simple poles at $k=ik_j$ ($j=\overline{1,N}$) and maybe has a simple pole $k=0$, \\
  then the equation (\ref{41}) has only zero solution in $L^1((x,+\infty);\mathbb{C}^n)$ for each fixed $x\ge0$.
\end{theorem}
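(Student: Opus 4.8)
The plan is to convert the two conclusions of Lemma \ref{l5.2} — the bound-state relations (\ref{46}) and the scattering relation (\ref{bui}) — into a single vector-valued function that, thanks to the factorization $S(k)=-J_1(-k)J_1(k)^{-1}$ in condition (IV), is meromorphic on all of $\mathbb{C}$, and then to annihilate it by Liouville's theorem. By Lemma \ref{l5.1} the extended $h$ (put $h\equiv0$ on $(-\infty,x)$) lies in $L^1\cap L^2$, is continuous, decays at $+\infty$, and has $h'\in L^1$; hence $\widehat h(k)=\int_x^\infty h(t)e^{-ikt}dt$ is analytic in $\mathbb{C}^-$ and continuous on $\overline{\mathbb{C}}^-$. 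The essential trick is to pass to the \emph{adjoint}: introduce the column function $\psi(k):=\int_x^\infty e^{ikt}h(t)^\dagger dt$, which is analytic in $\mathbb{C}^+$, continuous on $\overline{\mathbb{C}}^+$, has boundary value $\widehat h(k)^\dagger$ on $\mathbb{R}$, and satisfies $\psi(ik_j)=\widehat h(-ik_j)^\dagger$. Taking the adjoint of (\ref{bui}) and using $S(k)^\dagger=S(-k)=S(k)^{-1}$ from (\ref{16}) gives $S(k)\widehat h(k)^\dagger=-\widehat h(-k)^\dagger$; substituting $S(k)=-J_1(-k)J_1(k)^{-1}$ yields the clean identity
\[
J_1(k)^{-1}\widehat h(k)^\dagger=J_1(-k)^{-1}\widehat h(-k)^\dagger,\qquad k\in\mathbb{R}\setminus\{0\}.
\]

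Now observe that the left-hand side extends to a function meromorphic in $\mathbb{C}^+$ (since by (iii) $J_1(k)^{-1}$ is meromorphic there with simple poles only at the $ik_j$ and possibly at $0$, while $\psi$ is analytic), and the right-hand side extends meromorphically to $\mathbb{C}^-$. As the two agree on $\mathbb{R}\setminus\{0\}$ and are continuous there, they glue to a single function $\Xi(k)$, meromorphic on $\mathbb{C}$ and analytic across $\mathbb{R}\setminus\{0\}$, with possible poles only at $\pm ik_j$ and $0$. The bound-state poles are removable: writing $N_j:={\rm Res}_{ik_j}J_1(k)^{-1}$, the relation $J_1(k)^{-1}J_1(k)=I_n$ forces $N_jJ_1(ik_j)=0$, so $N_j$ annihilates ${\rm Ran}\,J_1(ik_j)$. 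On the other hand, hypothesis (ii) gives $C_jJ_1(ik_j)=0$ with ${\rm rank}\,C_j=\dim\ker J_1(ik_j)$, whence (using $C_j^\dagger=C_j$ and $\dim\ker J_1(ik_j)^\dagger=\dim\ker J_1(ik_j)$) ${\rm Ran}\,C_j=\ker J_1(ik_j)^\dagger$; combined with (\ref{46}) this yields $\psi(ik_j)=\widehat h(-ik_j)^\dagger\in(\ker J_1(ik_j)^\dagger)^\perp={\rm Ran}\,J_1(ik_j)$. Hence the residue $N_j\psi(ik_j)=0$, and $\Xi$ is entire except for at most a simple pole at $k=0$.

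For the growth estimate, arguing exactly as in Proposition \ref{2.2}(d) (whose bound uses only the asymptotic form and invertibility, both guaranteed here by (\ref{xmk1}) and the invertibility of $J_1$ off the poles), one gets $J_1(k)^{-1}=O(1)$ as $|k|\to\infty$ in $\overline{\mathbb{C}}^+$; integration by parts together with $h'\in L^1$ gives $\psi(k)=O(1/k)$. Therefore $\Xi(k)=O(1/k)\to0$ at infinity, so $k\,\Xi(k)$ is entire and bounded and $\Xi(k)=c/k$ for a constant vector $c=N_0\psi(0)\in\ker J_1(0)$. To see $c=0$, note $\Xi\equiv c/k$ gives $\psi(k)=\tfrac1kJ_1(k)c$; letting $|k|\to\infty$ and using $\psi\to0$ with (\ref{xmk1}) forces $A_1c=0$, while (for $x>0$) $k\psi(k)\to0$ forces $B_1c=0$. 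Since $A_1+iB_1=M\,{\rm diag}\{I_{n-n_D},I_{n_D}\}M^\dagger=I_n$ by (\ref{xmk2}), we get $c=(A_1+iB_1)c=0$; the case $x=0$ follows from the same asymptotics. Consequently $\Xi\equiv0$, i.e. $J_1(k)^{-1}\psi(k)\equiv0$ in $\mathbb{C}^+$, and since $J_1(k)$ is invertible off $\{ik_j\}$ we conclude $\psi\equiv0$, hence $\widehat h\equiv0$ on $\mathbb{R}$ and $h=0$ by Fourier inversion.

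I expect the main obstacle to be choosing the right analytic object: one must pair the adjoint $\widehat h(k)^\dagger$ (analytic in $\mathbb{C}^+$) with $J_1(k)^{-1}$ (meromorphic in $\mathbb{C}^+$), rather than the row vector $\widehat h$ with $J_1$, so that both factors live in the \emph{same} half-plane and (\ref{bui}) glues the two pieces across $\mathbb{R}$; the naive products coming directly from (\ref{bui}) each mix the two half-planes and do not extend. The two remaining delicate points are the removal of the bound-state poles, where (\ref{46}) and hypothesis (ii) must interlock precisely through the rank identity, and the behavior at $k=0$, where the possible simple pole is excluded only by invoking the exact leading coefficients $A_1,B_1$ in (\ref{xmk2}).
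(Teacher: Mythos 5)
Your proof is correct and is essentially the paper's own argument in transposed form: the paper pairs the row vector $k\widehat h(-k)$ with $[J_1(-\bar k)^\dag]^{-1}$ (both living in $\overline{\mathbb{C}}^+$) to build an entire bounded function, while you pair the column $\widehat h(k)^\dag$ with $J_1(k)^{-1}$ and insert the factor $k$ only at the Liouville step; the removal of the bound-state poles via (\ref{46}) and hypothesis (ii), the $O(1)$ bound on $J_1(k)^{-1}$ from (\ref{xmk1})--(\ref{xmk2}), and the killing of the constant by the large-$k$ asymptotics all coincide with the paper's steps. The one under-argued point is the case $x=0$: there $k\psi(k)\to ih(0)^\dag$, which need not vanish a priori, so the stated asymptotics yield only $ih(0)^\dag=B_1c$ rather than $B_1c=0$; you must additionally use the lower-half-plane representation $\psi(k)=-\tfrac{1}{k}J_1(k)c$ (equivalently, the evenness of $\Xi$ on $\mathbb{R}$ against the oddness of $c/k$, which in fact forces $c=0$ for every $x\ge0$ at once) to get $B_1c=0$ and $h(0)=0$ --- these are exactly the two signed limit relations the paper writes out explicitly.
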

\begin{proof}
Assume that $h(t)$ solves the equation (\ref{41}) in $L^1((x,+\infty);\mathbb{C}^n)$ for $x\ge0$.
From (\ref{bui}), we have
\begin{equation}\label{goq}
\widehat{h}(-k)J_1(-k)J_1(k)^{-1}=\widehat{h}(k),\quad k\in \mathbb{R}.
\end{equation}
Since $S(k)=S(-k)^\dag$, we obtain
\begin{equation}\label{bkx}
   J_1(-k)J_1(k)^{-1}=[J_1(-{k})^\dag]^{-1}J_1({k})^\dag=[J_1(-\bar{k})^\dag]^{-1}J_1(\bar{k})^\dag,\quad k\in \mathbb{R}.
\end{equation}
Together with (\ref{goq}) and (\ref{bkx}) we get
\begin{equation}\label{xxc1}
k\widehat{h}(-k)[J_1(-\bar{k})^\dag]^{-1}=k\widehat{h}(k)[J_1(\bar{k})^\dag]^{-1},\quad k\in \mathbb{R}.
\end{equation}
Here, to add the factor $k$ is because $[J_1(-\bar{k})^\dag]^{-1}$ maybe has a simple pole $k=0$, and this manipulation can guarantee that both sides of (\ref{xxc1}) are continuous at $k=0$.

Let us prove that the left and right hand sides of (\ref{xxc1}) are analytic in the upper and lower-half complex plane, respectively. If it true, then
the function
\begin{equation}\label{dp}
  P(k):=\left\{\begin{split}
         &k\widehat{h}(-k)[J_1(-\bar{k})^\dag]^{-1},\quad  k\in \overline{\mathbb{C}}^+, \\
          & k\widehat{h}(k)[J_1(\bar{k})^\dag]^{-1},\quad  k\in \overline{\mathbb{C}}^-,
       \end{split}\right.
\end{equation}
is analytic in the whole complex plane.

In fact, by virtue of the assumptions (i) and (iii), we get that $[J_1(-\bar{k})^\dag]^{-1}$ is analytic in ${\mathbb{C}}^+$ except for a finite number of simple poles $\{ik_j\}_{j=1}^N$, in particular,
\begin{equation*}\label{oun}
 [J_1(-\bar{k})^\dag]^{-1}=\frac{N_{-,k_j}^\dag}{-k+ik_j}+N_{0,k_j}^\dag+O(-k+ik_j),\quad k\to ik_j,\;j=\overline{1,N}.
\end{equation*}
Note that $J_1(-\bar{k})^\dag[J_1(-\bar{k})^\dag]^{-1}=I_n$, which implies $J_1(ik_j)^\dag N_{-,k_j}^\dag=0$. Then, we have
\begin{equation}\label{gfl}
  P_{1j} N_{-,k_j}^\dag=N_{-,k_j}^\dag,
\end{equation}
where $P_{1j}$ is the orthogonal projection onto $\ker J_1(ik_j)^\dagger$. On the other hand, by the  assumption (ii), we have
\begin{equation}\label{gfl1}
C_j=P_{1j}C_j=P_{1j}(C_j+I_n-P_{1j}):=P_{1j}D_j,
\end{equation}
where $D_j=C_j+I_n-P_{1j}$ is invertible. Together with (\ref{gfl1}), (\ref{gfl}) and (\ref{46}), we have
\begin{equation*}\label{hjan}
\widehat{h}(-ik_j){N_{-,k_j}^\dag}=\widehat{h}(-ik_j)P_{1j}{N_{-,k_j}^\dag}=\widehat{h}(-ik_j)C_jD_j^{-1}{N_{-,k_j}^\dag}=0.
\end{equation*}

Note that $\widehat{h}(-k)$ is analytic in $\mathbb{C}^+$.
It follows that $k\widehat{h}(-k)[J_1(-\bar{k})^\dag]^{-1}$ is analytic in the upper half plane. Similarly, one can also prove $k\widehat{h}(k)[J_1(\bar{k})^\dag]^{-1}$ is analytic in the lower half plane.
Therefore, $P(k)$ is analytic in $\mathbb{C}$.

By virtue of Lemma \ref{l5.1}, we get
\begin{equation*}
 \widehat{ h}(k)=-\frac{1}{ik}\left[\left.{h(t)e^{-ikt}}\right|_x^\infty-\int_x^\infty h'(t)e^{-ikt}dt\right]=\frac{\widehat{h'}(k)+h(x)e^{-ikx}}{ik},
\end{equation*}
which implies $k\widehat{ h}(k)\to-ih(x)e^{-ikx}$ as $|k|\to\infty$ in $\overline{\mathbb{C}}^{-}$. Similarly, we can also get $k\widehat{ h}(-k)\to ih(x)e^{ikx}$ as $|k|\to\infty$ in $\overline{\mathbb{C}}^{+}$.
On the other hand, note that $[J_1(-\bar{k})^\dag]^{-1}=O(1)$ when $|k|\to\infty$ in $\overline{\mathbb{C}}^{+}$ (see (\ref{xmk1}) and (\ref{xmk2})). Thus, $P(k)$ is a bounded row vector-valued entire function, and so by Liouville's theorem, $P(k)\equiv P_0$ which is a constant row vector. It follows from (\ref{dp}) that
\begin{equation}\label{kwq}
\left\{\begin{split}
&k\widehat{ h}(-k)=P_0J_1(-\bar{k})^\dag,\quad k\in\overline{\mathbb{C}}^{+},\\
&k\widehat{ h}(k)=P_0J_1(\bar{k})^\dag,\quad k\in\overline{\mathbb{C}}^{-},
\end{split}\right.
\end{equation}
which implies from  (\ref{xmk1}) that
\begin{equation*}
\left\{\begin{split}
 & ih(x)e^{ikx}-P_0(-ikA_1^\dag+B_1^\dag+A_1^\dag Q^\dag)\to0,\quad k\to+\infty,\;k\in \mathbb{R},\\
 & -ih(x)e^{ikx}-P_0(ikA_1^\dag+B_1^\dag+A_1^\dag Q^\dag)\to0,\quad k\to+\infty,\;k\in \mathbb{R}.
  \end{split}\right.
\end{equation*}
Note that as $k\to+\infty$ the limits of $e^{\pm ikx}$ do not exist if $x>0$ and $\pm ik P_0A_1^\dag\to\infty$ if $P_0A_1^\dag\ne0$. Thus $h(x)=0$ if $x>0$ and $P_0A_1^\dag=0$. In the case $x=0$, we have $\pm ih(0)=P_0B_1^\dag$, which implies $h(0)=0$.
This completes the proof.
\end{proof}

\begin{theorem}\label{t6.2}
Under the assumptions in Theorem \ref{t5.1}, for each fixed $x\ge0$, the Marchenko (\ref{32}) has a unique solution $K(x,y)$, and the matrix-valued function
\begin{equation}\label{xx.1}
  f(k,x):=e^{ikx}+\int_x^\infty K(x,t)e^{ikt}dt,\;\;x\ge0
\end{equation}
satisfies the Schr\"{o}dinger equation $-f''+V(x)f=k^2f$ with the potential
\begin{equation}\label{xx.2}
  V(x)=-2\frac{d}{dx}K(x,x),\quad x\ge0,
\end{equation}
and $V\in L_1^1(\mathbb{R}^+)$.
\end{theorem}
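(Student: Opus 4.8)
The plan is to prove the three assertions separately, exploiting the fact that \eqref{32} has literally the same form as the Marchenko equation treated in the Dirichlet case \cite{AM}: the boundary condition \eqref{3} affects only the data feeding $F$, not the structure of the equation, so once the uniqueness of Section~6 is in force I can transplant the estimates of \cite{AM}.

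\textbf{Step 1 (unique solvability).} For each fixed $x\ge0$ I would read \eqref{32}, row by row, as the operator equation $(I+\mathbf{F}_x)K(x,\cdot)=-F(x+\cdot)$ in $L^1((x,+\infty);\mathbb{C}^n)$. The right-hand side lies in this space because $F\in L^1(\mathbb{R}^+)$ by Theorem~\ref{t4.1}, and $\mathbf{F}_x$ is compact by the remark preceding Lemma~\ref{l5.1}. Under the hypotheses of Theorem~\ref{t5.1} the homogeneous equation \eqref{41} has only the zero solution, so the Fredholm alternative renders $I+\mathbf{F}_x$ boundedly invertible and \eqref{32} has a unique solution $K(x,\cdot)$.

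\textbf{Step 2 (regularity of $K$).} Next I would upgrade $K(x,y)$ to a twice-differentiable kernel. Since $F$ is absolutely continuous with $F'\in L_1^1(\mathbb{R}^+)$ by condition~(II) and Theorem~\ref{t4.1}, and since $(I+\mathbf{F}_x)^{-1}$ is uniformly bounded on compact $x$-intervals with $\mathbf{F}_x$ depending continuously on $x$, differentiating the operator equation in $x$ and in $y$ transfers the smoothness of $F$ to $K$, yielding second derivatives $K_{xx},K_{yy}$ that are locally integrable, decay as $y\to+\infty$, and make $K_{yy}-K_{xx}+V(x)K$ belong to $L^1((x,+\infty))$. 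This regularity step is the main obstacle, since every integration by parts and the vanishing of the boundary terms at $t=+\infty$ below must be justified by it; fortunately the argument is identical to that of \cite{AM}, the general boundary condition affecting only $F$.

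\textbf{Step 3 (the Schr\"odinger equation).} With $K$ smooth I would derive the Goursat-type identity $K_{yy}-K_{xx}+V(x)K=0$ for $y>x$, where $V(x):=-2\,dK(x,x)/dx$. Applying $\partial_y^2$ and $\partial_x^2$ to \eqref{32}, integrating the integral term by parts twice (the contributions at the lower limit $s=x$ producing $K(x,x)$ and its derivatives, those at $s=+\infty$ vanishing) and subtracting, the $F''$ terms cancel and the coefficient of $F(x+y)$ reduces to $2\,dK(x,x)/dx=-V(x)$ by the definition of $V$. Replacing $-V(x)F(x+y)$ via \eqref{32} then gives
\begin{equation*}
P(x,y)+\int_x^\infty P(x,s)F(s+y)\,ds=0,\qquad P(x,y):=K_{yy}(x,y)-K_{xx}(x,y)+V(x)K(x,y),
\end{equation*}
so that each row of $P(x,\cdot)$ solves the homogeneous equation \eqref{41}; by the uniqueness of Step~1, $P\equiv0$. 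Inserting $f$ from \eqref{xx.1} into $-f''+V(x)f-k^2f$, differentiating under the integral and integrating by parts, the boundary terms cancel exactly because $V=-2\,dK(x,x)/dx$, while the surviving integrand equals $P(x,t)e^{ikt}=0$; hence $-f''+V(x)f=k^2f$.

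\textbf{Step 4 ($V\in L_1^1$).} To close, I would set $y=x$ in \eqref{32} and differentiate in $x$, representing $V(x)=-2\,dK(x,x)/dx$ through $F'(2x)$, the product $K(x,x)F(2x)$ and an integral of $(K_x-K_s)F$, exactly as in the derivation of \eqref{37}. A Gr\"onwall argument applied to \eqref{32}, parallel to \eqref{f1}--\eqref{xx}, bounds $\|K(x,y)\|$ by a tail of $F$; feeding this together with $F\in L^1(\mathbb{R}^+)$ and $F'\in L_1^1(\mathbb{R}^+)$ into estimates of the type \eqref{zx.2}--\eqref{zx.1} then yields $\int_0^\infty(1+x)\|V(x)\|\,dx<\infty$, completing the proof.
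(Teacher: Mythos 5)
Your proposal is correct and takes essentially the same route as the paper: the paper's proof consists of your Step 1 (the Fredholm alternative combined with Theorem \ref{t5.1}) followed by a citation of \cite[p.117-122]{AM} for everything else, namely the regularity of $K(x,y)$, the derivation of the Schr\"odinger equation with $V=-2\,dK(x,x)/dx$, and an estimate on $V(x)-4F'(2x)$ which together with condition (II) yields $V\in L_1^1(\mathbb{R}^+)$. Your Steps 2--4 simply spell out the content of that citation.
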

\begin{proof}
As mentioned in the beginning of this section, that the equation (\ref{41}) has only zero solution in $L^1$ can guarantee the unique solvability of the Marchenko (\ref{32}) in $L^1$ by the Fredholm alternative theorem. It is shown in \cite[p.117-122]{AM} that if the Marchenko (\ref{32}) has a unique solution $K(x,y)$ in $L^1$, then it satisfies (\ref{xx.1}), (\ref{xx.2}) and
\begin{equation}\label{zx.0}
 \|V(x)-4F'(2x)\|< \infty,
\end{equation}
which implies from the condition (II) that
$V\in L_1^1(\mathbb{R}^+)$.
\end{proof}
\section{Necessary and sufficient conditions}
In this section, we shall provide the characterization of the scattering data $\mathcal{S}=\{S(k),k_j,C_j\}_{k\in \mathbb{R};j=\overline{1,N}}$. In other
words, necessary and sufficient conditions for the solvability of the inverse scattering problem $L(V,U)$ are characterized.
  In Theorem \ref{t5.1}, we have given the sufficient conditions for the unique solvability of the Marchenko equation (\ref{32}), which depend on the matrix-valued function $J_1(k)$. In this section, for the direct problem, we will first show the existence of the matrix $J_1(k)$. Then we consider the uniqueness of the inverse scattering problem. Finally, the necessary and sufficient conditions are provided.

\begin{lemma}\label{l5.3}
Assume that $V\in L_1^1(\mathbb{R}^+)$ and satisfies (\ref{2}), and  the matrix $U$ in (\ref{4}) is unitary. Then there exists the matrix-valued function $J_1(k)$ satisfying $S(k)=-J_1(-k)J_1(k)^{-1}$ and the conditions (i)$-$(iii) in Theorem \ref{t5.1}. Moreover, the rank of the matrix $J_1(0)$ is $n-\mu$, where $\mu$ is the multiplicity of the eigenvalue 1 of the matrix $S(0)$.
\end{lemma}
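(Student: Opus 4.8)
The plan is to produce $J_1$ not from scratch but as a constant right perturbation of the Jost matrix $J(k)$ from Proposition \ref{2.2}. The key observation is that right multiplication by a constant invertible matrix leaves the scattering matrix unchanged: if $J_1(k):=J(k)R$ with $R$ constant and invertible, then $-J_1(-k)J_1(k)^{-1}=-J(-k)RR^{-1}J(k)^{-1}=-J(-k)J(k)^{-1}=S(k)$ by (\ref{15}). So the entire burden is to choose $R$ so that the leading normalization in (\ref{14}) is converted into the form (\ref{xmk1})--(\ref{xmk2}) demanded by condition (i), while the kernel and pole conditions (ii)--(iii) and the rank statement are inherited automatically.

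To find $R$ I would invoke Proposition \ref{2.3}(d). Since $U_0=M{\rm diag}\{I_{n-n_D},-I_{n_D}\}M^\dag$ by (\ref{xmk}), the converse part of that proposition supplies a unitary $M$ with $M^\dag UM={\rm diag}\{X_1,-I_{n_D}\}$, with $X_1$ and $X_1+I_{n-n_D}$ invertible, and the invertible matrix $T={\rm diag}\{(X_1+I_{n-n_D})^{-1},\frac12 I_{n_D}\}$ satisfying $M^\dag(I_n+U)MT={\rm diag}\{I_{n-n_D},0_{n_D}\}$. I would then set $R:=2MTM^\dag$, which is constant and invertible. Using $A=\frac12(U+I_n)$ from (\ref{4}) one checks $AR=M{\rm diag}\{I_{n-n_D},0_{n_D}\}M^\dag=A_1$, so the coefficient of $-ik$ in $J(k)R$ coming from (\ref{14}) is exactly $A_1$. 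A short block computation with $B=\frac{i}{2}(U-I_n)$ then gives $BR-B_1=M{\rm diag}\{i(X_1-I_{n-n_D})(X_1+I_{n-n_D})^{-1},0_{n_D}\}M^\dag$, which is of the form $Q'A_1$ (its last $n_D$ columns in the $M$-basis vanish). Hence $J_1(k)=-ikA_1+B_1+(Q'+Q(0))A_1+o(1)$, which is condition (i) with $Q:=Q'+Q(0)$; analyticity in $\mathbb{C}^+$ and continuity in $\overline{\mathbb{C}}^+$ of $J_1$ are inherited from $J$.

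The remaining conditions follow quickly. For (ii), since the normalization matrices are Hermitian with $J(ik_j)^\dag C_j=0$ by (\ref{ali2}), taking adjoints gives $C_jJ(ik_j)=0$, whence $C_jJ_1(ik_j)=C_jJ(ik_j)R=0$; and because $R$ is invertible, $\dim\ker J_1(ik_j)=\dim\ker J(ik_j)={\rm rank}\,C_j$, again by (\ref{ali2}). For (iii), $J_1(k)^{-1}=R^{-1}J(k)^{-1}$ has the same singularities as $J(k)^{-1}$, namely simple poles at the $ik_j$ by (\ref{13}) and at most a simple pole at $k=0$ by the $k\to0$ expansion behind (\ref{3e}). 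Finally, since $R$ is invertible, ${\rm rank}\,J_1(0)={\rm rank}\,J(0)$, and evaluating (\ref{3e}) at $k=0$ yields $M_1J(0)T_1={\rm diag}\{0_\mu,I_{n-\mu}\}$, so ${\rm rank}\,J(0)=n-\mu$.

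The main obstacle I anticipate is purely algebraic: verifying that the constant term of $J(k)R$ lands in the admissible class, i.e. that $BR$ equals $B_1$ modulo a term $Q'A_1$ that can be absorbed into the free matrix $Q$ of condition (i). This requires writing $A$, $B$, $A_1$, $B_1$ simultaneously in the basis furnished by $M$ and tracking the block structure. The fact that $R=2MTM^\dag$ is built precisely from the matrix $T$ that collapses $I_n+U$ in Proposition \ref{2.3}(d) is what forces the leading coefficient onto $A_1$ and leaves the residual in the constant term with the required column structure; everything else is a direct consequence of the direct-problem properties of $J(k)$.
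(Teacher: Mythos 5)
Your proposal is correct and follows essentially the same route as the paper: the paper also sets $J_1(k):=J(k)T_6$ with $T_6=2MTM^\dag$ built from the matrix $T$ of Proposition \ref{2.3}(d), verifies condition (i) by the same block computation in the $M$-basis (absorbing the residual $i(X_1-I_{n-n_D})(X_1+I_{n-n_D})^{-1}$ block into the free matrix $Q$), and inherits (ii), (iii) and the rank statement from (\ref{ali2}), Proposition \ref{2.2} and (\ref{3e}) exactly as you do.
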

\begin{proof}
By virtue of (\ref{4}), (\ref{14}) and Proposition \ref{2.3} (d), together with  (\ref{xmk1}), we have that as $|k|\to\infty$ in $\overline{\mathbb{C}}^+$
\begin{align}
\notag 2 M^\dag J(k) M=&-{ik} M^\dag(U+I_n) M+iM^\dag(U-I_n) M\\
\notag&+M^\dag Q(0)M M^\dag(U+I_n) M+o(1)\\
\notag =&-ik{\rm diag}\{I_{n-n_D},0_{n_D}\}T^{-1}+i{\rm diag}\{(X_1-I_{n-n_D}),-2I_{n_D}\}\\
\quad &+Q'{\rm diag}\{I_{n-n_D},0_{n_D}\}T^{-1}+o(1),\label{ali}
\end{align}
where $T={\rm diag}\{(X_1+I_{n-n_D})^{-1},\frac{1}{2}I_{n_D}\}$ and $X_1$ appear in Proposition \ref{2.3} (d), and $Q'=M^\dag Q(0)M$. Postmultiplying both sides of (\ref{ali}) by $T$, we get that as $|k|\to\infty$ in $\overline{\mathbb{C}}^+$
\begin{align}
\notag 2M^\dag J(k) MT=&-ik{\rm diag}\{I_{n-n_D},0_{n_D}\}\\
\notag&+i{\rm diag}\{(X_1-I_{n-n_D})(X_1+I_{n-n_D})^{-1},-I_{n_D}\}\\
\notag&+Q'{\rm diag}\{I_{n-n_D},0_{n_D}\}+o(1)\\
\notag=&-ik{\rm diag}\{I_{n-n_D},0_{n_D}\}+i{\rm diag}\{0_{n-n_D},-I_{n_D}\}\\
&+Q{\rm diag}\{I_{n-n_D},0_{n_D}\}+o(1),\label{ali1}
\end{align}
where $Q=i{\rm diag}\{(X_1-I_{n-n_D})(X_1+I_{n-n_D})^{-1},0_{n_D}\}+Q'$.

Denote $T_6:=2MTM^\dag$, and $J_1(k):=J(k)T_6$. Then from (\ref{ali1}) and (\ref{ali2}) together with Proposition \ref{2.2} we obtain that $J_1(k)$ satisfies the conditions (i)$-$(iii) in Theorem \ref{t5.1} with $A_1$ and $B_1$ satisfying (\ref{xmk2}), and ${\rm rank} J_1(0)=n-\mu$ from (\ref{3e}).
\end{proof}

Now let us consider the uniqueness of the inverse scattering problem. Together with the problem $L(V,U)$, we consider the problem $L(\widetilde{V},\widetilde{U})$ of the same form but with different coefficients $\widetilde{V}(x)$ and $\widetilde{U}$. We agree that if a certain symbol  $\delta$ denotes an object related to $L(V,U)$, then $\widetilde{\delta}$ will denote an analogous object related to $L(\widetilde{V},\widetilde{U})$.

\begin{theorem}\label{t7.1}
Assume that $V(x)$ and $\widetilde{V}(x)$ satisfy (\ref{2}) in $L_1^1(\mathbb{R}_+)$, and $U$ and $\widetilde{U}$ are unitary matrices. If $\mathcal{S}=\mathcal{\widetilde{S}}$ then $V(x)=\widetilde{V}(x)$ and $U=\widetilde{U}$.
\end{theorem}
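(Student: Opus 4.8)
The plan is to exploit the fact that the whole reconstruction pipeline---from the scattering data $\mathcal{S}$ to the kernel $K(x,y)$, and thence to $V$ and $U$---is deterministic once the Marchenko equation is known to be uniquely solvable. First I would observe that the hypothesis $\mathcal{S}=\widetilde{\mathcal{S}}$ means $S(k)=\widetilde{S}(k)$ for all $k\in\mathbb{R}$, together with $N=\widetilde{N}$, $k_j=\widetilde{k}_j$ and $C_j=\widetilde{C}_j$ for $j=\overline{1,N}$. Since the Hermitian unitary matrix $U_0$ is recovered as the limit of $S(k)$ in (\ref{17}), it follows that $U_0=\widetilde{U}_0$ as well. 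Consequently, comparing the defining formula (\ref{33}) term by term, the two driving functions coincide: $F(x)=\widetilde{F}(x)$ for all $x\ge0$.

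Next, I would invoke the unique solvability of the Marchenko equation. Because $V,\widetilde{V}\in L_1^1(\mathbb{R}^+)$ satisfy (\ref{2}) and $U,\widetilde{U}$ are unitary, Lemma \ref{l5.3} guarantees that both problems admit a factorization $S(k)=-J_1(-k)J_1(k)^{-1}$ with $J_1$ meeting the requirements (i)$-$(iii); hence condition (IV) of Theorem \ref{t5.1} is in force, and Theorem \ref{t6.2} applies. Thus, for each fixed $x\ge0$, the Marchenko equation (\ref{32}) has a unique solution in $L^1$. The kernels $K(x,\cdot)$ and $\widetilde{K}(x,\cdot)$ associated with $L(V,U)$ and $L(\widetilde{V},\widetilde{U})$ each solve (\ref{32}) with the same right-hand data $F=\widetilde{F}$, so by uniqueness $K(x,y)=\widetilde{K}(x,y)$ for all $y>x\ge0$.

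With the kernels identified, the recovery formulas finish the argument. The potential follows directly from (\ref{xx.2}): $V(x)=-2\frac{d}{dx}K(x,x)=-2\frac{d}{dx}\widetilde{K}(x,x)=\widetilde{V}(x)$. The Jost solutions coincide by (\ref{xx.1}), since $f(k,x)=e^{ikx}+\int_x^\infty K(x,t)e^{ikt}dt=\widetilde{f}(k,x)$. Finally, the boundary matrix is reconstructed through $G(k,x)=f(-k,x)+f(k,x)S(k)$ from (\ref{psi}) and the explicit formula (\ref{hyu}); since both $f$ and $S$ agree, the boundary values $G(k,0)$ and $G'(k,0)$ agree, and therefore $U=\widetilde{U}$.

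I expect the only genuine subtlety to be the appeal to unique solvability of the Marchenko equation: one must check that condition (IV) holds for both data sets, which is exactly the content of Lemma \ref{l5.3}, and then note that the two Marchenko equations are literally identical because $F=\widetilde{F}$. Everything downstream---differentiating $K(x,x)$ to obtain $V$, and substituting into (\ref{hyu}) to obtain $U$---is then routine once $K=\widetilde{K}$ has been established.
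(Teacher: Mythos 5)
Your proposal is correct, and its overall skeleton matches the paper's: reduce everything to showing $K(x,\cdot)=\widetilde{K}(x,\cdot)$, then read off $V$ from (\ref{8})/(\ref{xx.2}), $f$ from (\ref{6}), and $U$ from (\ref{psi}) and (\ref{hyu}). Where you genuinely diverge is at the crucial middle step. You establish $K=\widetilde{K}$ by noting that Lemma \ref{l5.3} supplies condition (IV) for the common data $\mathcal{S}=\widetilde{\mathcal{S}}$ (the $J_1$ built from either problem will do, since condition (IV) is existential), so Theorems \ref{t5.1} and \ref{t6.2} give unique $L^1$-solvability of the Marchenko equation, and both kernels solve the same equation because $F=\widetilde{F}$. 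The paper instead devotes the bulk of its proof to showing the stronger statement that the data determines $J_1$ itself uniquely, i.e.\ $J_1(k)=\widetilde{J}_1(k)$ on $\overline{\mathbb{C}}^+$, via the piecewise-analytic function $Y(k)$ in (\ref{xcoa}), a careful analysis of the poles at $ik_j$ and the behaviour at $k=0$, and Liouville's theorem. For the uniqueness theorem as stated, your shorter route suffices: existence of one admissible $J_1$ is all that Theorem \ref{t5.1} requires, and the identity $F=\widetilde{F}$ already forces the two Marchenko equations to coincide. What the paper's longer argument buys is the additional fact that condition (IV) does not introduce any hidden freedom in $J_1$, which is of independent interest for the characterization in Theorem \ref{t8.1} but is not needed to conclude $V=\widetilde{V}$ and $U=\widetilde{U}$. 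One small point worth making explicit in your write-up: the reason both $K$ and $\widetilde{K}$ are eligible competitors for the unique $L^1$ solution is the bound (\ref{7}), which places $K(x,\cdot)$ and $\widetilde{K}(x,\cdot)$ in $L^1((x,\infty))$.
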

\begin{proof}
From (\ref{6}), (\ref{psi}) and (\ref{hyu}), we see that the matrix $U$ in the boundary condition (\ref{3}) can be uniquely determined by the scattering matrix $S(k)$ and the kernel $K(x,y)$.
Moreover, the potential $V(x)$ is recovered by $K(x,y)$ (see (\ref{8})) and the matrix-valued function $F(x)$ is uniquely determined by scattering data $\mathcal{S}$ (see (\ref{33})).
Therefore, it is sufficient to prove that $K(x,y)$ can be uniquely determined by $F(x)$, together with (\ref{32}), which is equivalent to show that  for each $x\ge0$ the equation (\ref{41})
has only zero solution in $L^1((x,\infty);\mathbb{C}^n)$. Under the assumption that $V$ satisfies (\ref{2}) and $U$ is unitary, the scattering data $\mathcal{S}$ satisfies the conditions (I)$-$(III) presented in Section 4. Using Theorem \ref{t5.1}, it is enough to prove that the scattering data $\mathcal{S}$ uniquely determine $J_1(k)$ satisfying the conditions (i)$-$(iii). Namely, we shall prove that if $\mathcal{S}=\mathcal{\widetilde{S}}$ then $J_1(k)=\widetilde{J_1}(k)$ for all $k\in \overline{\mathbb{C}}^+$.

From (\ref{15}) and Lemma \ref{l5.3}, together with $S(k)=\widetilde{S}(k)$, we have
\begin{equation*}
J_1(-k)J_1(k)^{-1}=\widetilde{J}_1(-k)\widetilde{J}_1(k)^{-1},\quad k\in\mathbb{R}.
\end{equation*}
Since $J_1(k)$ is invertible for $k\in \mathbb{R}\setminus\{0\}$ and it is possible $\det J_1(0)=0$, we get
\begin{equation}\label{cxy}
\widetilde{J}_1(-k)^{-1}J_1(-k)=\widetilde{J}_1(k)^{-1}J_1(k),\quad k\in\mathbb{R}\setminus\{0\}.
\end{equation}
By Lemma \ref{l5.3}, we know that $J_1(k)^{-1}$ and $\widetilde{J}_1(k)^{-1}$ are analytic in $\mathbb{C}^+$ except for the simple poles $\{ik_j\}_{j=1}^N$. Thus, we have
\begin{equation*}\label{oun}
 J_1({k})^{-1}=\frac{N_{1-,k_j}}{k-ik_j}+O(1),\; \widetilde{J}_1({k})^{-1}=\frac{\widetilde{N}_{1-,k_j}}{k-ik_j}+O(1),\quad k\to ik_j,\;j=\overline{1,N}.
\end{equation*}
Using the same method as in (\ref{gfl}) and (\ref{gfl1}), we can get
\begin{equation}\label{oxn}
N_{1-,k_j}=N_{1-,k_j}[D_j^{-1}]^\dag C_j,\quad \widetilde{N}_{1-,k_j}=\widetilde{N}_{1-,k_j}[\widetilde{D}_j^{-1}]^\dag C_j.
\end{equation}
It follows from (\ref{oxn}) and the assumption (ii) in Theorem \ref{t5.1} that
\begin{equation*}
 \widetilde{N}_{1-,k_j} J_1(ik_j)=0,\quad j=\overline{1,N}.
\end{equation*}
Note that $J(k)$ and $\widetilde{J}(k)$ are analytic in $\mathbb{C}^+$. It follows that
the left hand side of (\ref{cxy}) is analytic in $\mathbb{C}^{-}$ and right hand side of (\ref{cxy}) is analytic in $\mathbb{C}^{+}$.

Let us prove that the equation (\ref{cxy}) also holds at $k=0$. Without loss of generality, we assume that $J(k), S(0)$  and $\widetilde{J}(k), \widetilde{S}(0)$ have the forms of (\ref{3e}) and (\ref{3e1}), respectively, for some invertible matrices $M_1,T_1$ and $\widetilde{M}_1,\widetilde{T}_1$. Then it follows from the proof of Lemma \ref{l5.3} that
\begin{equation}\label{fkjz}
  J_1(k)=J(k)T_6=kM_1^{-1} \begin{bmatrix}
I_\mu+o(1) & o(1) \\ o(1) & \frac{1}{k}(I_{n-\mu}+o(1))
\end{bmatrix}T_1^{-1}T_6, \quad k\to0,\;\;k\in \overline{\mathbb{C}}^+.
\end{equation}
Similarly, we can get the asymptotics of $\widetilde{J_1}(k)$ as $k\to0$.
Using the formula (\ref{e11}), we calculate the inverse of $\widetilde{J_1}(k)$, and get
\begin{equation}\label{fkjz1}
\!\!  \widetilde{J}_1(k)^{-1}\!=\!\frac{1}{k}\widetilde{T}_6^{-1}\widetilde{T}_1 \begin{bmatrix}
I_\mu+o(1) & o(k) \\ o(k) & kI_{n-\mu}+o(k)
\end{bmatrix}\widetilde{M}_1, \;\; k\to0,\;\;k\in \overline{\mathbb{C}}^+.
\end{equation}
Denote $
 \widetilde{M}_1M_1^{-1}:=\begin{bmatrix}
W_1 & W_2 \\ W_3 & W_4
\end{bmatrix}.$
Since $S(0)=\widetilde{S}(0)$ and (\ref{3e1}), we have
\begin{equation*}
\begin{bmatrix}
W_1 & W_2 \\ W_3 & W_4
\end{bmatrix}\begin{bmatrix}
I_\mu & 0 \\ 0& -I_{n-\mu}
\end{bmatrix}=\begin{bmatrix}
I_\mu & 0 \\ 0& -I_{n-\mu}
\end{bmatrix}\begin{bmatrix}
W_1 & W_2 \\ W_3 & W_4
\end{bmatrix}.
\end{equation*}
By a direct calculation, we get $W_2=0$ and $W_3=0$. It follows from
(\ref{fkjz}) and  (\ref{fkjz1})  that $\widetilde{J}(k)^{-1}J(k)$ is well defined at $k=0$. Hence the equation (\ref{cxy}) also holds at $k=0$.
Denote
\begin{equation}\label{xcoa}
  Y(k):=\left\{\begin{split}
         &\widetilde{J}_1(-k)^{-1}J_1(-k), \quad k\in \overline{\mathbb{C}}^-,\\
          & \widetilde{J}_1(k)^{-1}J_1(k),\quad k\in \overline{\mathbb{C}}^+.
        \end{split}\right.
\end{equation}
Then by  analytic continuation theorem the matrix-valued function $Y(k)$ is analytic in the whole complex plane.

Denote $J_1^0(k):=-ikA_1+B_1$, which is obviously invertible for large $k$ from (\ref{xmk2}). Then rewrite (\ref{xmk1}) as
\begin{equation}\label{xmk3}
  J_1(k)=J_1^0(k)-Q\frac{J_1^0(k)}{ik}+\frac{QB_1}{ik}+o(1),\quad |k|\to\infty,\;\;k\in \overline{\mathbb{C}}^{+}.
\end{equation}
Note that for $|k|\to\infty$ in $\overline{\mathbb{C}}^{+}$ there holds
$J_1^0(k)^{-1}=O(1)$.
Thus, we have $J_1(k)J_1^0(k)^{-1}=I_n+o(1)$, or equivalently,
\begin{equation}\label{xmkq}
  J_1^0(k)^{-1}J_1(k)=I_n+o(1),\quad |k|\to\infty,\;\;k\in \overline{\mathbb{C}}^{+}.
\end{equation}
On the other hand, from (\ref{xmk}) and (\ref{xmk2}), we have
\begin{equation*}
  A_1=\frac{1}{2}(U_0+I_n), \;\; B_1=\frac{i}{2}(I_n-U_0),
\end{equation*}
which implies from $U_0=\widetilde{U}_0$ that $A_1=\widetilde{A}_1$ and $B_1=\widetilde{B}_1$, and hence $J_1^0(k)=\widetilde{J}_1^0(k)$.
Inserting $J_1^0(k)$ in (\ref{xcoa}, we have from (\ref{xmkq}) that the matrix-valued function
\begin{equation*}
  Y(k)=\left\{\begin{split}
         &[J_1^0(-k)^{-1}\widetilde{J}_1(-k)]^{-1}J_1^0(-k)^{-1}J_1(-k)=I_n+o(1), \;|k|\to\infty,\;k\in \overline{\mathbb{C}}^-,\\
          & [J_1^0(k)^{-1}\widetilde{J}_1(k)]^{-1}J_1^0(k)^{-1}J_1(k)=I_n+o(1),\;|k|\to\infty,\; k\in \overline{\mathbb{C}}^+.
        \end{split}\right.
\end{equation*}
By the Liouville theorem, we conclude that $Y(k)\equiv I_n$, and hence $J_1(k)=\widetilde{J}_1(k)$ for $k\in \overline{\mathbb{C}}^+$ from (\ref{xcoa}). The proof is complete.
\end{proof}

From the arguments in Sections 2-4 and Lemma \ref{l5.3}, we see that if the potential $V\in L_1^1(\mathbb{R}^+)$ and satisfies (\ref{2}), and the matrix $U$ in (\ref{4}) is unitary, then the scattering data $\mathcal{S}$ satisfies the assertions (I)$-$(III) presented in Section 4, and the condition (IV) presented in Theorem \ref{t5.1}.

Conversely, by Theorems \ref{t5.2}, \ref{t5.1}, \ref{t6.2} and \ref{t7.1} we see that  the data $\mathcal{S}$ satisfying the conditions (I)$-$(IV) corresponds to a unique self-adjoint potential $V\in L_1^1(\mathbb{R}^+)$ and a unique unitary matrix $U$. Therefore, we arrive at the following theorem---solvability conditions.
\begin{theorem}\label{t8.1}
For the data $\{S(k),k_j,C_j\}_{k\in\mathbb{R};j=\overline{1,N}}$ to be the scattering data of a certain boundary value problem $L(V,U)$ with a unitary matrix $U$ and a self-adjoint potential $V\in L_1^1(\mathbb{R}^+)$, it is necessary and sufficient to satisfy the conditions (I)$-$(IV).
\end{theorem}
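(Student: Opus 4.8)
The statement is the final synthesis of the paper, so my plan is to assemble the two implications from results already established rather than to prove anything new from scratch. For \emph{necessity}, I would start from a genuine problem $L(V,U)$ with $V\in L_1^1(\mathbb{R}^+)$ self-adjoint and $U$ unitary and simply read off that its scattering data satisfies (I)--(IV): Propositions \ref{2.1}--\ref{2.3} and Theorem \ref{t4.1} yield (I) and (II), Remark \ref{r3.1} together with the eigenvalue structure of Proposition \ref{2.2}(b) (giving $k_j>0$, $C_j\ge0$) yields (III), and Lemma \ref{l5.3} produces the auxiliary matrix-valued function $J_1(k)$ required by (IV). This direction is therefore immediate.

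For \emph{sufficiency}, given data satisfying (I)--(IV), I would run the reconstruction of Section 5. Condition (II) makes $F$ in (\ref{33}) admissible and the operator $\mathbf{F}_x$ compact; condition (IV) activates Theorem \ref{t5.1}, which forces the homogeneous equation (\ref{41}) to have only the zero solution for every $x\ge0$; the Fredholm alternative together with Theorem \ref{t6.2} then delivers a unique kernel $K(x,y)$, a Jost solution through (\ref{6}), and a potential $V\in L_1^1(\mathbb{R}^+)$ through (\ref{8}). Theorem \ref{t5.2}, which uses only (I) and (III), upgrades this to $V=V^\dagger$ and to a unitary matrix $U$ recovered from (\ref{hyu}). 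At this stage $L(V,U)$ is a bona fide self-adjoint problem.

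The step I expect to be the main obstacle is the \emph{reproduction} claim, namely that the scattering data of the reconstructed $L(V,U)$ is exactly the prescribed $\mathcal{S}$. The route I would take is to let $\widehat F$ be the function built via (\ref{33}) from the reconstructed data $\widehat{\mathcal{S}}$; since the Jost kernel of $L(V,U)$ is the same $K$ just solved for, both $F$ and $\widehat F$ satisfy the Marchenko equation (\ref{32}) with that common $K$, so the difference $G=\widehat F-F$ obeys
\begin{equation*}
G(x+y)+\int_x^\infty K(x,t)G(t+y)\,dt=0,\qquad y>x\ge0,
\end{equation*}
a homogeneous Volterra-type relation. Re-running the Gr\"onwall estimate already used in (\ref{35})--(\ref{xx}) (choosing $x$ large so the kernel norm becomes a contraction, then propagating downward) forces $G\equiv0$, i.e. $\widehat F=F$. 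The genuinely delicate point is then to separate, inside (\ref{33}), the bound-state sum $\sum_j C_j^2 e^{-k_jx}$ from the continuous contribution $\frac{1}{2\pi}\int(S-U_0)e^{ikx}\,dk$, so as to conclude $\widehat S=S$, $\widehat k_j=k_j$ and $\widehat C_j=C_j$ term by term; here I would exploit the distinct analytic and exponential growth of the two pieces, together with the fact, extracted from the proof of Theorem \ref{t7.1}, that $S$ determines $J_1(k)$ and hence the pole locations $ik_j$ and the associated residue data. Combining necessity, sufficiency, and the uniqueness of Theorem \ref{t7.1} then yields the biconditional in its sharp form: the data satisfies (I)--(IV) if and only if it is the scattering data of a unique self-adjoint $L(V,U)$ with unitary $U$.
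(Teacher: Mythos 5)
Your proposal follows the same skeleton as the paper's own argument: necessity is read off from the direct-problem results of Sections 2--4 together with Lemma \ref{l5.3}, and sufficiency is assembled from Theorems \ref{t5.2}, \ref{t5.1}, \ref{t6.2} and \ref{t7.1}. Where you genuinely depart from the paper is in your third paragraph. The paper's ``conversely'' direction simply cites the uniqueness theorem (Theorem \ref{t7.1}) and never proves that the scattering data of the reconstructed problem $L(V,U)$ coincides with the prescribed data $\mathcal{S}$; yet Theorem \ref{t7.1} presupposes that both data sets already arise from genuine problems and are equal, so it cannot substitute for this reproduction step. You correctly isolate this as the real obstacle and sketch a workable route: both $F$ and $\widehat F$ satisfy the Marchenko equation with the same kernel $K$, so their difference obeys a homogeneous Volterra-type relation that the Gr\"onwall estimate of (\ref{f1})--(\ref{xx}) kills, giving $\widehat F=F$. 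The remaining task of splitting $\widehat F=F$ into $\widehat S=S$, $\widehat k_j=k_j$ and $\widehat C_j=C_j$ is left as a plan, and it is indeed the delicate part (one must use the different analytic characters of the exponentially decaying discrete sum and the Fourier transform of $S-U_0$, and identify the reconstructed Jost data with the given $J_1$). In short, your write-up is at least as complete as the paper's: it reproduces the paper's assembly of citations and, in addition, names and partially fills a gap that the paper passes over in silence.
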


\noindent {\bf Acknowledgments.} The authors would like to thank Professor Natalia Boundarenko for discussion and giving some corrections.
The research work was supported in part by the Natural Science Foundation of Jiangsu Province of China (BK 20141392) and  National Natural Science Foundation of China (11171152, 91538108 and 11611530682).

\end{document}